\journalname{Information Sciences}
\newtheorem{mydef}{Definition}
\newtheorem{mytheorem}{Theorem}
\newtheorem{mylemma}{Lemma}
\newtheorem{mycol}{Corollary}
\theoremstyle{remark}
\newtheorem{myremark}{Remark}
\begin{document}

\begin{frontmatter}

\dochead{}
%% Use \dochead if there is an article header, e.g. \dochead{Short communication}

\title{Shannon Entropy based Randomness Measurement and Test \\for Image Encryption}

%% use optional labels to link authors explicitly to addresses:
%% \author[label1,label2]{<author name>}
%% \address[label1]{<address>}
%% \address[label2]{<address>}

\author[tufts]{Yue Wu\corref{cor*}}
\ead{ywu03@ece.tufts.edu}
\cortext[cor*]{Corresponding author. Tel: +1 617-627-3217}
\author[tufts]{Joseph P. Noonan}
\author[Texas]{Sos Agaian}

\address[tufts]{Department of Electrical and Computer Engineering, Tufts University, 161 College Ave. Medford, MA 02155, USA}
\address[Texas]{Department of Electrical and Computer Engineering, University of Texas at San Antonio, One UTSA Circle, San Antonio, TX 78249, USA}

%% Title, authors and addresses

%% use the tnoteref command within \title for footnotes;
%% use the tnotetext command for the associated footnote;
%% use the fnref command within \author or \address for footnotes;
%% use the fntext command for the associated footnote;
%% use the corref command within \author for corresponding author footnotes;
%% use the cortext command for the associated footnote;
%% use the ead command for the email address,
%% and the form \ead[url] for the home page:
%%
%% \title{Title\tnoteref{label1}}
%% \tnotetext[label1]{}
%% \author{Name\corref{cor1}\fnref{label2}}
%% \ead{email address}
%% \ead[url]{home page}
%% \fntext[label2]{}
%% \cortext[cor1]{}
%% \address{Address\fnref{label3}}
%% \fntext[label3]{}

\begin{abstract}
%% Text of abstract
The quality of image encryption is commonly measured by the Shannon entropy over the ciphertext image. However, this measurement does not consider to the randomness of local image blocks and is inappropriate for scrambling based image encryption methods. In this paper, a new information entropy-based randomness measurement for image encryption is introduced which, for the first time, answers the question of whether a given ciphertext image is sufficiently random-like. It measures the randomness over the ciphertext in a fairer way by calculating the averaged entropy of a series of small image blocks within the entire test image. In order to fulfill both quantitative and qualitative measurement, the expectation and the variance of this averaged block entropy for a true-random image are strictly derived and corresponding numerical reference tables are also provided. Moreover, a hypothesis test at significance $\alpha$-level is given to help accept or reject the hypothesis that the test image is ideally encrypted/random-like. Simulation results show that the proposed test is able to give both effectively quantitative and qualitative results for image encryption. The same idea can also be applied to measure other digital data, like audio and video.
\end{abstract}

\begin{keyword}
%% keywords here, in the form: keyword \sep keyword
Image Encryption \sep Shannon Entropy \sep Encryption Measurement \sep Image Randomness \sep Hypothesis Test
%% MSC codes here, in the form: \MSC code \sep code
%% or \MSC[2008] code \sep code (2000 is the default)

\end{keyword}

\end{frontmatter}

%%
%% Start line numbering here if you want
%%
% \linenumbers

%% main text
\section{Introduction}
Recently, the Wikileaks incident harmed many personal and governmental interests by publishing a series of private, secret and classified media. This incident again reminded people and governments the importance of information security, although information security is still a relative new word existing for a half century.
About two decades ago, the first data encryption standard (DES), a block cipher based on a 56-bit key, \cite{DES} was published for binary sequences. Since then, research on the sequence cipher published many excellent ciphers, including Blowfish \cite{Blowfish} in 1993, Twofish \cite{Twofish} in 1998 and advanced encryption standard (AES) \cite{AES} in 1998. Nowadays, these encryption algorithms still prevail and are used by thousands of people, organizations and companies.

As the other side of the coin, cryptography analysis developed at the same time as data encryption.Many ciphers are considered to be insecure due to some undesired properties that are weak to some cryptography analysis. For example, DES is believed to be insecure as it is weak to the differential attack and the bruteforce attack \cite{CryptographyBook}. Many of these attacks, such as the frequency attack, ciphertext-only attack and known ciphertext attack are designed directly for weak ciphertext, which is not random-like. Therefore, the ability of generating random-like ciphertext is one of most crucial criteria for a secure cipher.

A main focus of testing the randomness of a ciphertext is its distribution. Ideally, this distribution of ciphertext is uniform, because a uniform distribution implies that the each symbol in ciphertext is equally important. As a result, the cipher is invulnerable to statistical attacks. Moreover, both the relationship between ciphertext and plaintext and the relationship between ciphertext and encryption key are very complicated and involved. These two properties, namely confusion and diffusion, were identified by Claude Shannon in his 1949 masterpiece paper \cite{ShannonCipher}.

Conventionally, randomness tests are designed for binary sequences, for example the Kolmogorov test \cite{HandbookStatistics}, poker test \cite{HandbookStatistics}, gap test \cite{HandbookStatistics}, autocorrelation test \cite{HandbookStatistics}, Shannon entropy \cite{ShannonEntropy}, diffusion randomness test \cite{RandomnessBlockCipher}, etc. Moreover, the standard randomness tests FIPS 140-1\cite{FIPS140-1} and 140-2\cite{FIPS140-2} are also for binary sequences. However, these randomness tests are out of date for image encryption, because digital image is typically a two dimensional data rather than a one-dimensional binary sequence. More specifically, it is different from a one-dimensional binary sequence in two main aspects: the high information redundancy between neighbor pixels and the bulk data nature. On one hand, this implies that conventional binary sequence ciphers are not good for image data either for a low encryption efficiency \cite{DateEncryption} or a relatively small block size for encryption. Many image ciphers have been researced, including chaotic system based image ciphers \cite{2Dpiecewise,3DBaker,3DCat,ChaoticFinitePrecision,ChaoticNeuralNetwork,Compound3DBaker,CompoundChaos,DNAChaos,HyperChaos,LogisticMap,MitureChaos,SPcipherKumar,ZhangChaos}, SCAN language based algorithms \cite{SCAN,SCAN2}, transform based algorithms \cite{MagicCube,WaveletEncryption,Sudoku} and others \cite{WaveEncryption}. On the other hand, the distinctive characteristics of image data implies that the randomness tests designed for binary sequences are inappropriate for image data.
As a result, randomness tests for these image ciphers are commonly done with respect to some specific attack(s) instead of testing the randomness of binary sequences. For example, histogram analysis \cite{WaveEncryption,Sudoku,2Dpiecewise,3DBaker,3DCat,ChaoticFinitePrecision,ChaoticNeuralNetwork,Compound3DBaker,CompoundChaos,DNAChaos,HyperChaos,LogisticMap,MitureChaos,SPcipherKumar,ZhangChaos,abcd}, and Shannon entropy test  \cite{DNAChaos,BitPermutation,Compound3DBaker,ChaoticNeuralNetwork,MitureChaos,2Dpiecewise}
are designed to test the cipher security to statistical attacks; autocorrelation test \cite{ChaoticNeuralNetwork,DNAChaos,3DCat,BitPermutation,WaveEncryption,MitureChaos,2Dpiecewise,abcd} is designed to test the cipher security to ciphertext, and the Unified Average Changing Intensity (UACI) and Number of Pixel Change Rate (NPCR) tests \cite{DNAChaos,Compound3DBaker,CompoundChaos,BitPermutation,WaveEncryption,3DCat,MitureChaos,2Dpiecewise,abcd}are designed to test the cipher security to differential attacks.

One major problem of these prevailing randomness tests for image ciphers is that these tests provide quantitative results rather than qualitative results. For example, in the Shannon entropy test, the entropy scores for ciphertext images $A$ and $B$ may be 7.9911 and 7.9912, respectively. Although ciphertext image $B$ has a relatively higher entropy score than $A$, which implies that ciphertext $B$ is more random-like than $A$, the most important problem is still unanswered: whether or not the ciphertext image(s) $A$, $B$ or both are sufficiently random-like? In other words, whether a ciphertext is sufficiently random-like is the first priority question. %Without answering 'Yes' to this first priority question, it is meaningless to compare randomness of one ciphertext image with another .
As long as a ciphertext is sufficiently random-like, whether its entropy score is 7.9911, or 7.9912 makes no statistical significance of difference.

\begin{enumerate}
    \item This measurement should consider the nature of image data, for example, the two-dimensional property and bulk data property.
    \item This measurement should be mathematically well defined and easily calculated.
    \item This measurement should be applicable to all kind of image encryption methods, including permutation cipher, substitution cipher, etc.
    \item This measurement should provide reference values from a true-random image, such that both the quantitative and qualitative results can be drawn for any given test image.
\end{enumerate}
It is clear that both conventional histogram analysis and Shannon entropy test lack properties 1), 3) and 4) and that autocorrelation analysis, UACI and NPCR attain the property 1) 2) and 3) while not 4). Some other image encryption quality/randomness measurements \cite{Li:2008:NMI:1487744.1488442,edgeRandomness}
 also fail to include the property 4.

In this paper, we proposed a Shannon entropy based randomness test for image encryption which attains all above four properties. Unlike other variants of Shannon entropy for image encryption \cite{Li:2008:NMI:1487744.1488442}, our measurement is directly designed for image data and is well mathematically grounded. The proposed measurement uses information from local image blocks rather than the global image. More specifically, it defines the sample mean $\overline {H_K}$ of information entropies for $K$ randomly selected non-overlapped image blocks within a test image. Both the theoretical mean $\mu_{H_K^*}$ and variance $\sigma_{H_K^*}^2$ of this sample mean are derived from a true-random image, where the intensity of each pixel follows an independent and identical uniform distribution. Quantitative test results are obtained by comparing the actual $\overline {H_K}$ and the theoretical mean $\mu_{H_K^*}$. Moreover, an $\alpha$ level of statistical hypothesis test is also given in the paper to give a qualitative test according to the Central Limit Theorem (CLT). The case of approximating $\overline {H_K}$ as a Gaussian when $K$ is insufficiently large is also considered by using the Berry-Esseen Theorem (BET). Randomness tests by using the block entropy test on commercial image ciphers \cite{I-Cipher,pictureEncryption} and image encryption algorithms \cite{3DCat,Sudoku} are done with computer simulations. Simulation results validate the effectiveness of the test.

The remainder of the paper is organized as follows: Section 2 discusses the preliminary backgrounds in the Shannon entropy, the CLT and the BET; Section 3 proposes the block entropy test for image encryption, derives the theoretical mean $\mu_{H_K^*}$ and variance $\sigma_{H_K^*}^2$ from a true random image and develops a hypothesis test with $\alpha$-level significance; Section 4 provides simulation results of testing randomness on image ciphers by using the block entropy test; and Section 5 concludes the paper.
\section{Background}
\label{Background}
\subsection{Shannon Entropy and Properties}
In 1948, Claude Shannon first proposed the concept of Shannon entropy \cite{ShannonEntropy}. Since then, the Shannon entropy has been widely used in a variety of information sciences. It measures the randomness and quantifies the expected value of the information contained in a message, usually in bit units. The Shannon entropy of a random variable $X$ can be defined as Eqn. (\ref{eqnEntropy}), $P_i$ is defined in Eqn. (\ref{eqnpi}) where $x_i$ indicates the $i$th possible value of $X$ out of $n$, and $P_i$ denotes the possibility of $X=x_i$.
\begin{equation}
\label{eqnEntropy}
H(X) =H(P_1,...,P_n)= -\sum\limits^{n}_{i=1}P_i\log_2 P_i
\end{equation}
\begin{equation}
\label{eqnpi}
P_i = Pr(X=x_i)
\end{equation}
The Shannon Entropy attains, but is not limited to, the following properties:
\begin{enumerate}
    \item Bounded: $0\leq H(X)\leq \log_2{n}$
    \item Symmetry: $H(P_1,P_2,...) = H(P_2,P_1,...)$ etc.
    \item Grouping: $H(P_1,...,P_n) = H(P_1+P_2,P_3,...,P_n)+(P_1+P_2)H(\frac{P_1}{P_1+P_2},\frac{P_2}{P_1+P_2})$
\end{enumerate}

In the context of digital images, an image $X$ of size $M$-by-$N$ can be considered as a system with $L$ pixel intensity scales. For example, a 8-bit gray image allows $L = 256$ gray scales from 0 to 255. Additionally, denote the number of pixels within image $X$ at pixel intensity scale $l$ as $\aleph(l)$. Then $P_l=Pr(X=l)=\aleph(l)/MN$. Finally, Eqn. (\ref{eqnEntropy}) turns into Eqn. (\ref{eqnImgEntropy}) in the context of image data.
\begin{equation}
\label{eqnImgEntropy}
H(X) =-\sum\limits^{L-1}_{l=1}P_l\log_2 P_l=\sum\limits^{L-1}_{l=0}\frac{\aleph(l)}{MN}\log_2\frac{MN}{\aleph(l)}
\end{equation}
This image entropy attains its maximum when a pixel's intensity is equally likely at any scale $l$ as Eqn. (\ref{dUniform}) shows.
\begin{equation}
\label{dUniform}
P_0 = P_1 = \ldots = P_l= \ldots = P_{L-2} = P_{L-1} = 1/L
\end{equation}

An ideally encrypted image is completely random and thus its entropy reaches the theoretical maximum $\log_2L$. Since the image entropy is a quantitative measurement for $\{P_0,...,P_l,...,P_{L-1}\}$, it is an equivalent test to the histogram analysis, which plots the distribution of $P_l$ and is commonly used for security analysis in the image encryption literature.
\subsection{Central Limit Theorem and Berry-Esseen Theorem}
Let $Y_1,Y_2,...,Y_n$ be a sequence of $n$ independent and identically distributed observations on a random variable $Y$ associated with a finite mean $\mu$ and a variance $\sigma^2$. The Central Limit Theorem (CLT) states that the sample mean of these observations approaches the normal distribution with a mean $\mu$ and a variance $\sigma^2/n$, as the number of samples $n$ increases. Mathematically, the CLT can be stated as Eqn. (\ref{CLT}) shows. The most important merit of the CLT is that it is true even when the probability density function (PDF) of the random variable $Y$ does not follow a normal distribution.
\begin{equation}
\label{CLT}
\overline{Y_n}=\sum\limits^{n}_{i=1}\frac{Y_i}{n} \sim {\cal N}(\mu,\frac{\sigma^2}{n}),as {\ \ n\rightarrow\infty}
\end{equation}

Instead of using $\overline{Y_n}$, the random variable $Z_n$ defined in Eqn. (\ref{Z_test}) is commonly used in hypothesis tests, where $Z_n \sim {\cal N}(0,1)$ as $n\rightarrow\infty$. The convergence in distribution implies that Eqn. (\ref{Z_norm}) is held for arbitrary $z\in\Re$, where $\Phi(z)$ is the cumulative distribution function (CDF) of ${\cal N}(0,1)$. Consequently, the statistical test designed on this $Z$ statistic is called the Z-test, where $F_{Z_n}(z)$ denotes the actual CDF of $Z_n$. .
\begin{equation}
\label{Z_test}
Z_n=\frac{\overline{Y_n}-\mu}{\sigma/\sqrt{n}}
\end{equation}
\begin{equation}
\label{Z_norm}
\lim_{n\rightarrow\infty}Pr(Z_n\leq z) = \lim_{n\rightarrow\infty}F_{Z_n}(z)=\Phi(z)
\end{equation}

As a supplement to CLT, Berry-Esseen Theorem (BET) quantifies the rate at which this convergence to normality takes place. The theorem states that there exists a positive constant $\cal C$, such that the BET inequality defined in Eqn. (\ref{BET}) holds for all $z$ and $n$, where $\sigma$ is the standard deviation of $Y$ and $\rho$ is defined in Eqn. (\ref{rho}).
\begin{equation}
\label{BET}
|F_{Z_n}(z)-\Phi(z)|\leq \frac{{\cal C}\rho}{\sigma^3\sqrt n}
\end{equation}
\begin{equation}
\label{rho}
\rho=E[|Y-\mu|^3]<\infty
\end{equation}

The BET tells that the difference between the CDF of the standard normal distribution and that of the $n$ sample mean of a random variable associated with a finite mean $\mu$ and a variance $\sigma^2$ differs by no more than the specified amount. The calculated values of the constant $\cal C$ was 7.59 \cite{BET1942} in 1941. Over the years, this value has greatly decreased. So far the best estimate of $\cal C$ is found as ${\cal C}<0.4784$ \cite{BET} in 2011, derived from the inequality Eqn. (\ref{Korolev}).
%%Esseen, Carl-Gustav (1942). "On the Liapunoff limit of error in the theory of probability". Arkiv f0‹2r matematik, astronomi och fysik A28: 1¨C19. ISSN 0365-4133.
%%Korolev, Victor; Shevtsova, Irina (2010). "An improvement of the Berry-Esseen inequality with applications to Poisson and mixed Poisson random sums". Scandinavian Actuarial Journal. doi:10.1080/03461238.2010.485370.

\begin{equation}
\label{Korolev}
\sup\limits^{}_{z} |F_{Z_n}(z)-\Phi(z)|\leq\frac{0.33477(\rho+0.429\sigma^3)}{\sigma^3\sqrt{n}}
\end{equation}

%% The Appendices part is started with the command \appendix;
%% appendix sections are then done as normal sections
%% \appendix

%% \section{}
%% \label{}

%% References
%%
%% Following citation commands can be used in the body text:
%% Usage of \cite is as follows:
%%   \cite{key}         ==>>  [#]
%%   \cite[chap. 2]{key} ==>> [#, chap. 2]
%%
\section{Information Entropy based Randomness Test for Image Encryption}
\subsection{Ideally Encrypted Image}
Before designing a measurement or test for image encryption, the question of what is an ideally encrypted image has to be answered. Although there might be other answers, the ideally encrypted image in this paper is considered as a random-like image, which is not discernible from a true random image. In other words, we believe that the ideally encrypted image and the true random image have the same statistics. Indeed, this is the ideal case for image encryption, which encrypts the original image pixel information so well that one cannot differentiate the encrypted image from a true random image. However, sharing the same statistics does not imply that an ideally encrypted image is a true random image but some image very alike it.

\begin{mydef}
True Random Image (Ideally Encrypted Image):\\ If an image random field $R$ of size $M$-by-$N$ with $L$ intensity scales satisfies the condition that $\forall i\in[1,M]$ and $j\in[1,N]$, the image pixel located at the $i$th row and $j$th column $R(i,j)$ is an independently and identically distributed random variable with the discrete uniform distribution over 0 to L-1, namely $\forall \textit{ pixel } p \in R$, $p\sim{\cal U}[0,L-1]$, then this image $R$ is a random image of size $M$-by-$N$ with $L$ intensity scales.
%\begin{enumerate}
%    \item , $\exists R[i,j]\sim {\cal U}[0,L-1]$.
%    \item $\forall i\in[1,M]$ and $j\in[1,N]$, $R[i,j]$ is independently distributed.
%\end{enumerate}
\end{mydef}

%As a result, conventionally the ideally encrypted image is considered as the image whose pixel intensities are uniformly distributed and thus the entropy of an encrypted image should attains the theoretical upper bound. However, a stronger claim is that any segment in the encrypted image should also be random-like, which implies a high entropy for an image segment. This additional claim is very important to image data, because
It is noticeable that a true random image $R$ has a very flat histogram, a high Shannon entropy over the entire image and a very low autocorrelation coefficient between neighbor pixels. Moreover, any image block of a true random image $R$ should also attain a high Shannon entropy. This is an assumption that is true but is omitted in randomness tests for image encryption. In other words, we believe that an image containing some image blocks with low Shannon entropy scores is not ideally encrypted/random-like, no matter how high its global Shannon entropy is.
%
%\begin{itemize}
%  \item  {an image cipher commonly splits a large image into small blocks for fast processing.}
%  \item  { high information redundancy is one nature of the image data.}
%\end{itemize}
%When these two things combine together, the stronger claim has to be fulfilled otherwise the information contained in plaintext might be leaked.

Fig. 1 shows an example why the additional local randomness constraints has to be included for testing randomness in image encryption. The plaintext image of the rabbit is encrypted by some image cipher using block processing. Some image block, say ears of the rabbit, however, is encrypted by a weak key. From the color histogram, it is clear that such a weak does not influence the ciphertext distributions very much. Although such a weak key only leads to a limited amount of information leakage (only ears are recongized), it may completely divulge the information of the image (rabbit).

\begin{figure}[ht!]
    \label{fig-rabbit}
    \begin{minipage}[b]{0.24\linewidth}
      \centering
     \centerline{\includegraphics[width=\linewidth]{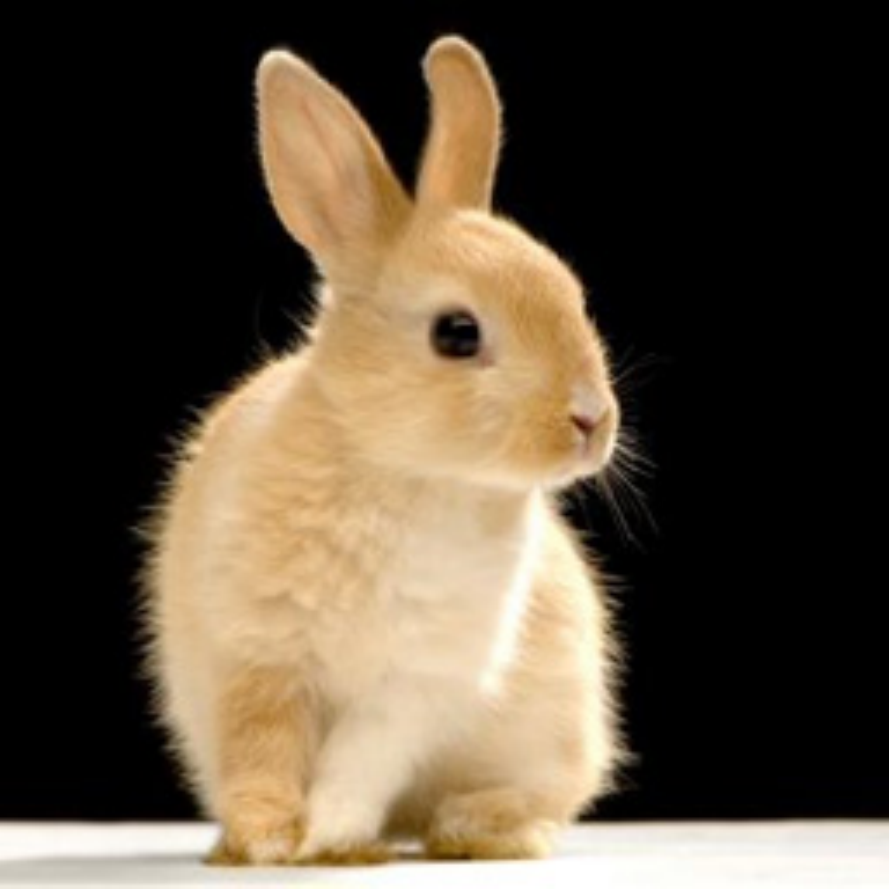}}
      \vspace{0.12cm}
    \end{minipage}\hfill
    \begin{minipage}[b]{0.24\linewidth}
      \centering
     \centerline{\includegraphics[width=\linewidth]{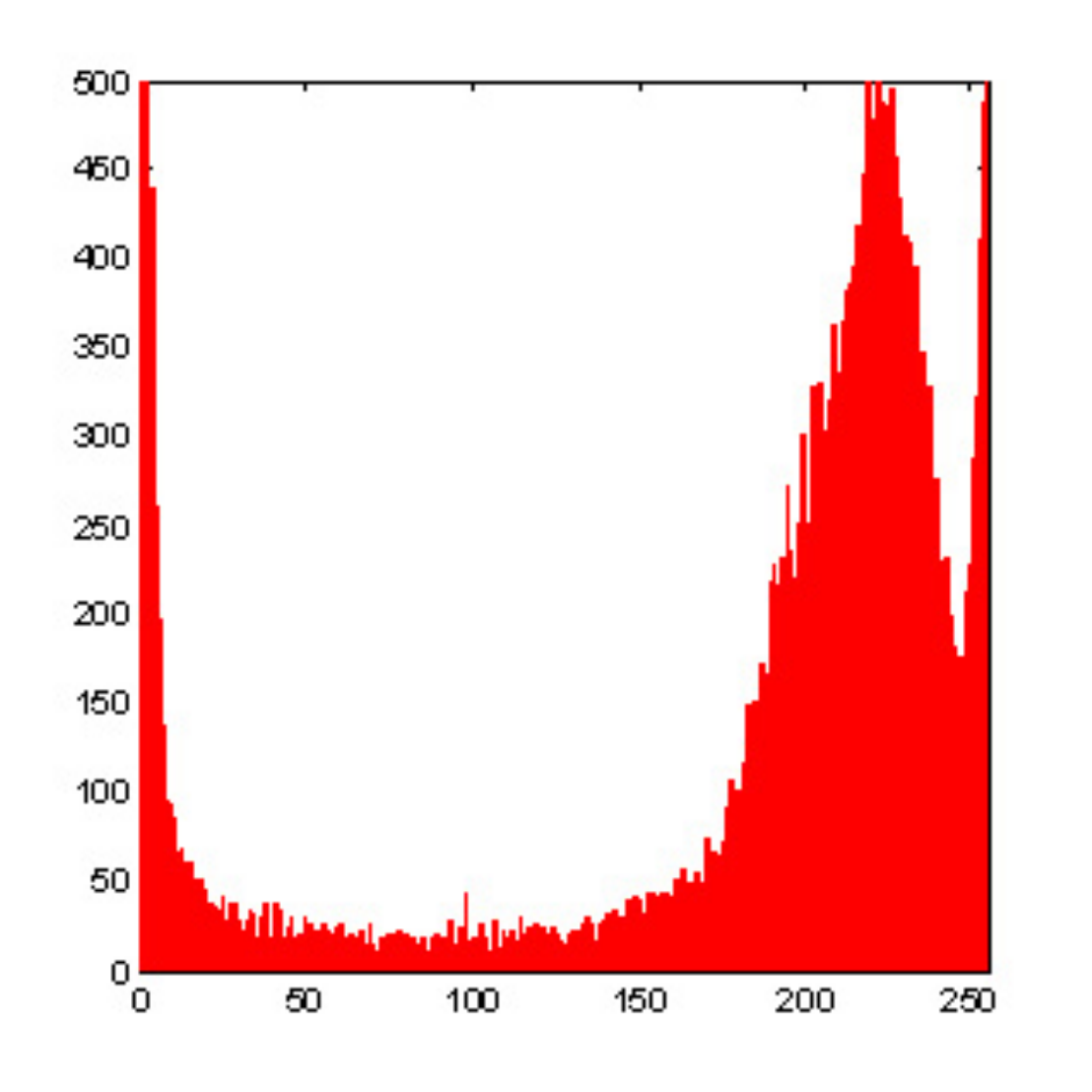}}
      \vspace{0.12cm}
    \end{minipage}\hfill
    \begin{minipage}[b]{0.24\linewidth}
      \centering
     \centerline{\includegraphics[width=\linewidth]{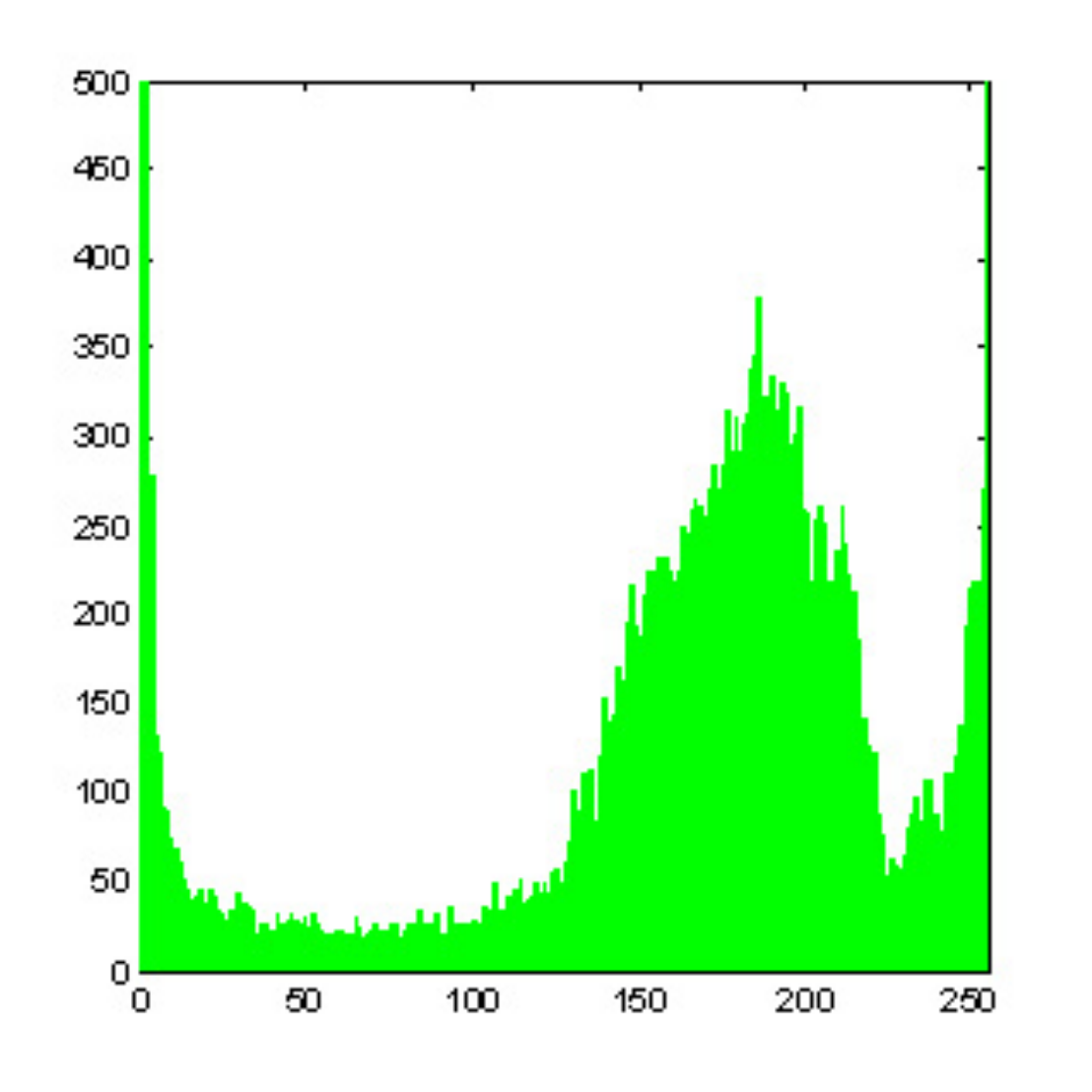}}
      \vspace{0.12cm}
    \end{minipage}\hfill
    \begin{minipage}[b]{0.24\linewidth}
      \centering
     \centerline{\includegraphics[width=\linewidth]{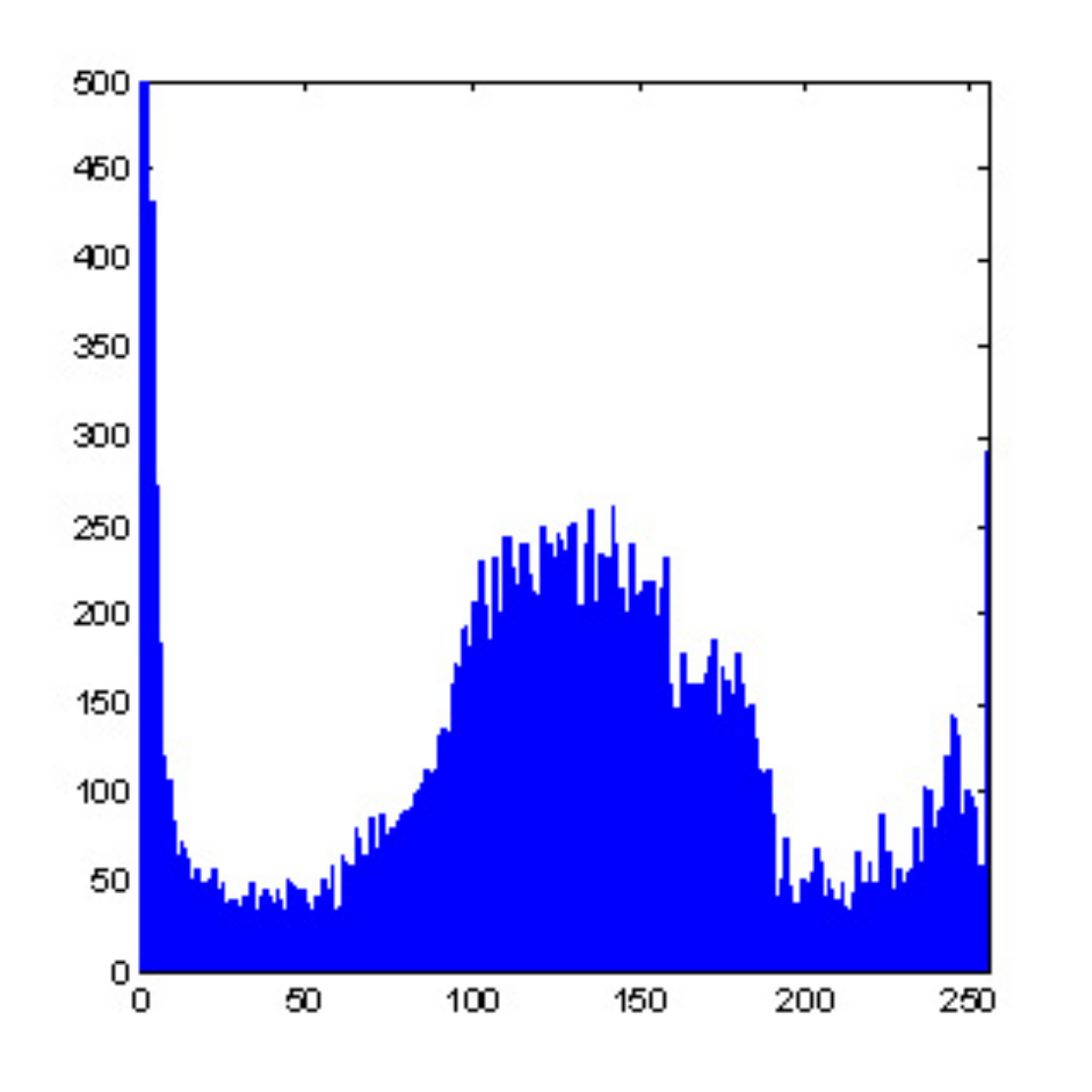}}
      \vspace{0.12cm}
    \end{minipage}\hfill
    \begin{minipage}[b]{0.24\linewidth}
      \centering
     \centerline{\includegraphics[width=\linewidth]{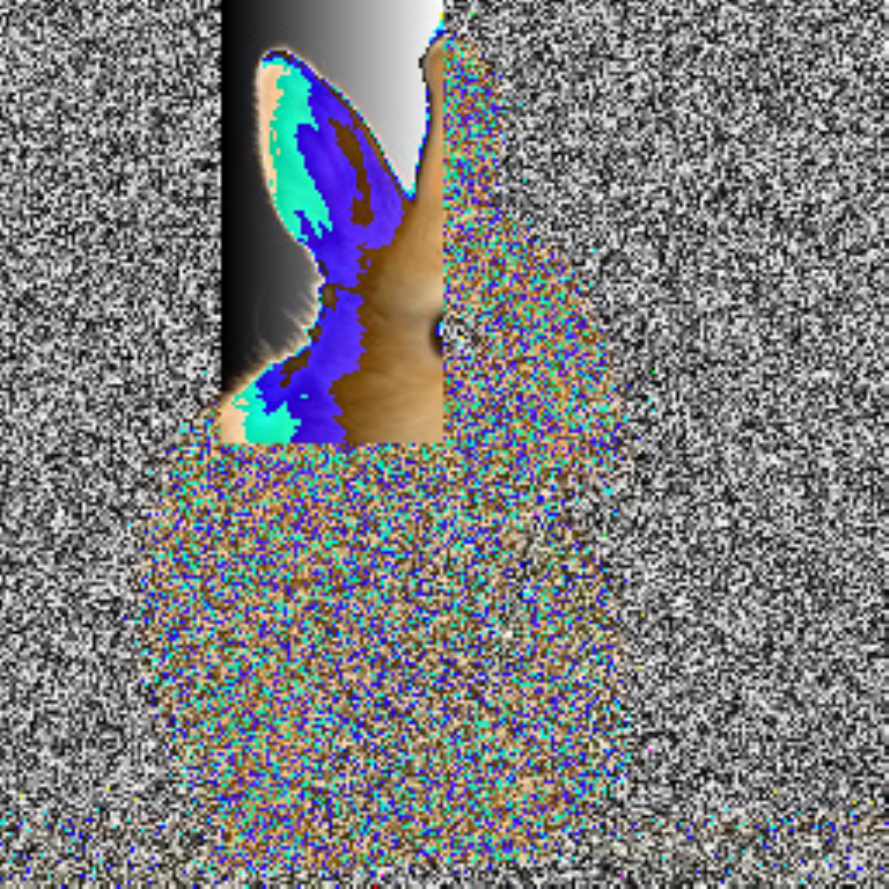}}
      \vspace{0.12cm}
    \end{minipage}\hfill
    \begin{minipage}[b]{.24\linewidth}
      \centering
     \centerline{\includegraphics[width=\linewidth]{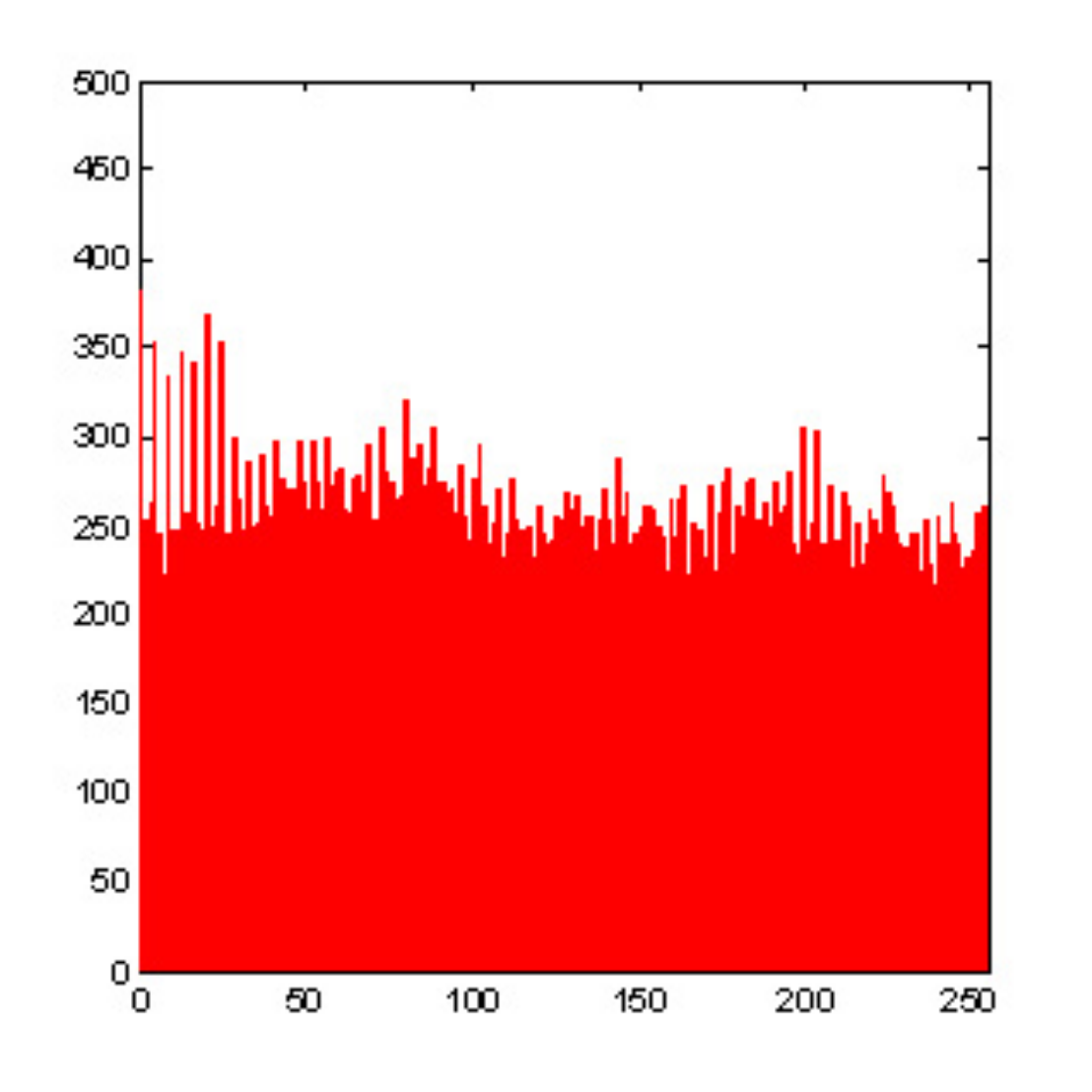}}
      \vspace{0.12cm}
    \end{minipage}\hfill
    \begin{minipage}[b]{.24\linewidth}
      \centering
     \centerline{\includegraphics[width=\linewidth]{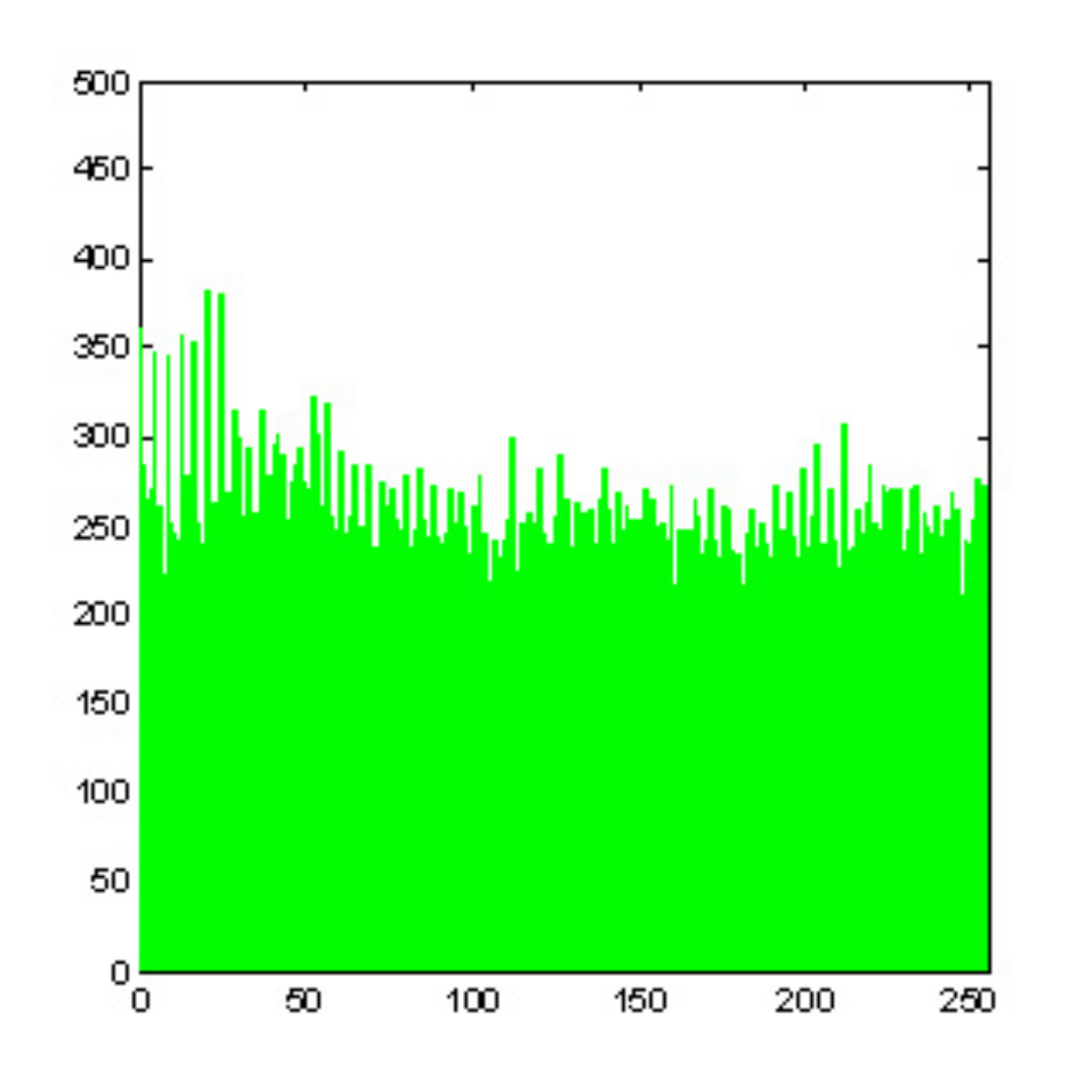}}
      \vspace{0.12cm}
    \end{minipage}\hfill
    \begin{minipage}[b]{.24\linewidth}
      \centering
     \centerline{\includegraphics[width=\linewidth]{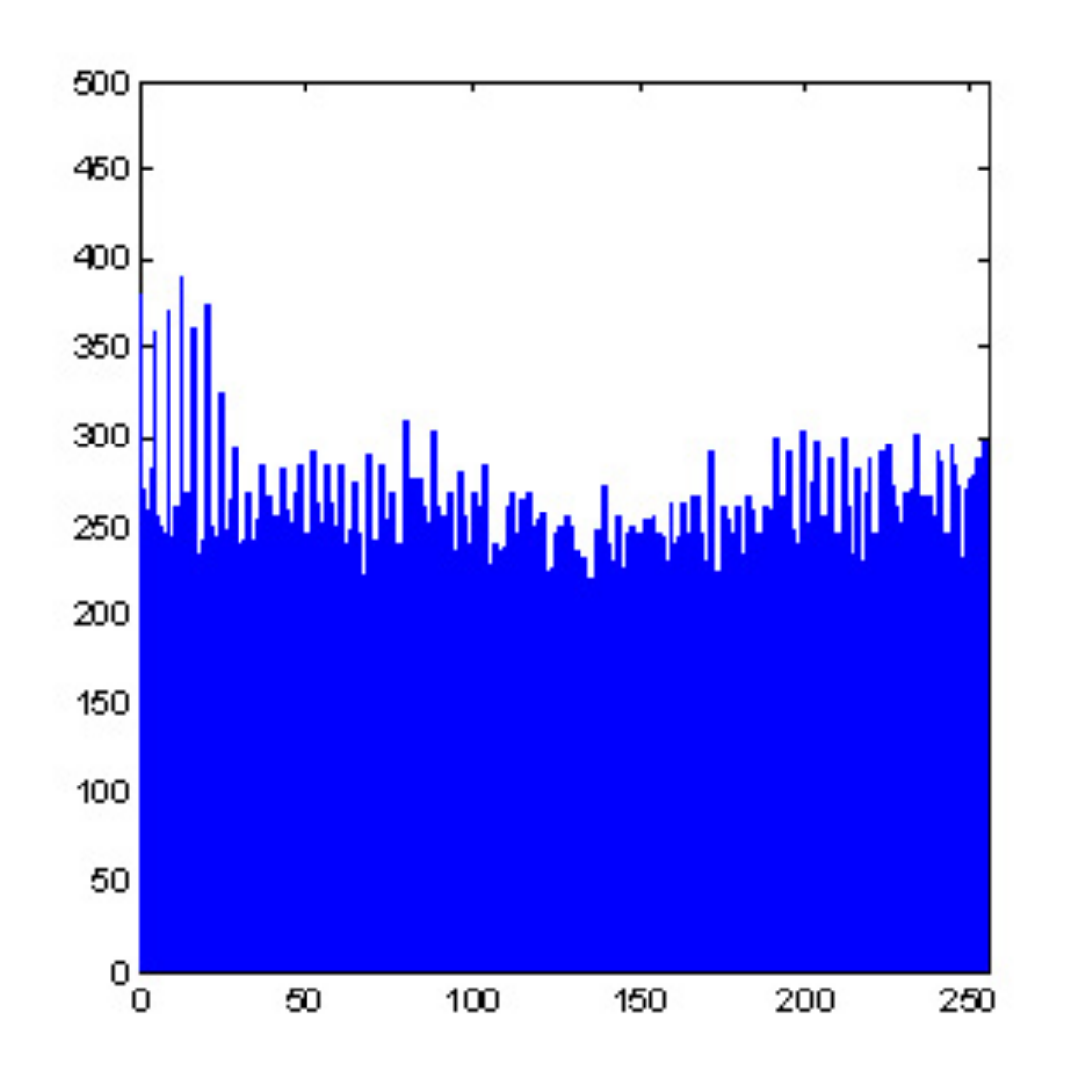}}
      \vspace{0.12cm}
    \end{minipage}\hfill
    \caption{An image encryption example (1st row: plaintext image and its histograms for R,G,and B channels; 2nd row: ciphertext image and its histograms for R,G and B channels)}
\end{figure}

%Fig. \ref{fig-rabbit} is also an example of why using the global entropy for randomness test of image encryption is inappropriate. It is noticeable that R, G and B histograms of ciphertext is much 'flatter' than those of plaintext, which means that the distribution of ciphertext pixel intensities is uniform-like.
Although the ciphertext histogram distributions contains some small 'bumps', the entropy of ciphertext is 7.99373, which is very close to 8, the theoretical upper bound of entropy for an 8-bit image. Therefore, a pure high entropy tested on the entire image is insufficient to test whether the test image is ideally encrypted/random-like. In other words, the global entropy test does not satisfy the first measurement consideration, laid out in the introduction.
%Inspired by the confusion property of an ideal cipher, we believed that for any pixel $p$ in an encrypted image by an ideal image cipher, its intensity is equally likely to be scale $l\in\{0,1,...,L-1\}$. In other words, we assume Eqn. (\ref{idealCipher}) is true for an ideal image cipher.
%\begin{eqnarray}
%\label{idealCipher}
%Pr(p=0)&=&Pr(p=1)=...=Pr(p=l)\nonumber\\
%&=&...=Pr(p=L-1)=\frac{1}{L}
%\end{eqnarray}
%As a result, both the entropy within the whole encrypted and the entropy within a small image block are high for an ideal image cipher.
\subsection{Block Entropy Test for Image Encryption}
%This section mainly quantitatively answers the question left in the end of the previous section: how high the block entropy is such that an image cipher can be concluded as effective in generating random-like images?

In order to measure the local entropy over image blocks rather than the global entropy, the block entropy test on a given encrypted image $X$ is fulfilled as follows:
\begin{itemize}
    \item {Step 1. Randomly select $Y_1,Y_2,...,Y_K$ image blocks at size $M$-by-$N$ within $X$ ($L$ intensity scales) without overlapping}
    \item {Step 2. $\forall i\in\{1,2,...,K\}$ calculate Shannon entropy $H(Y_i)$ via Eqn. (\ref{eqnImgEntropy}) }
    \item {Step 3. Calculate the sample mean $\overline{H_K}$ over these $K$ block entropies via Eqn. (\ref{sampleMean})}
\end{itemize}
\begin{equation}
\label{sampleMean}
\overline{H_K}=\sum\limits^{K}_{i=1}\frac{H(Y_i)}{K}
\end{equation}
\begin{figure}[ht!]
    \label{fig-example}
    \begin{minipage}[b]{0.24\linewidth}
      \centering
     \centerline{\includegraphics[width=\linewidth]{rabbit}}
     \centerline{\scriptsize(a) Input}
      \vspace{0.12cm}
    \end{minipage}\hfill
    \begin{minipage}[b]{0.24\linewidth}
      \centering
     \centerline{\includegraphics[width=\linewidth]{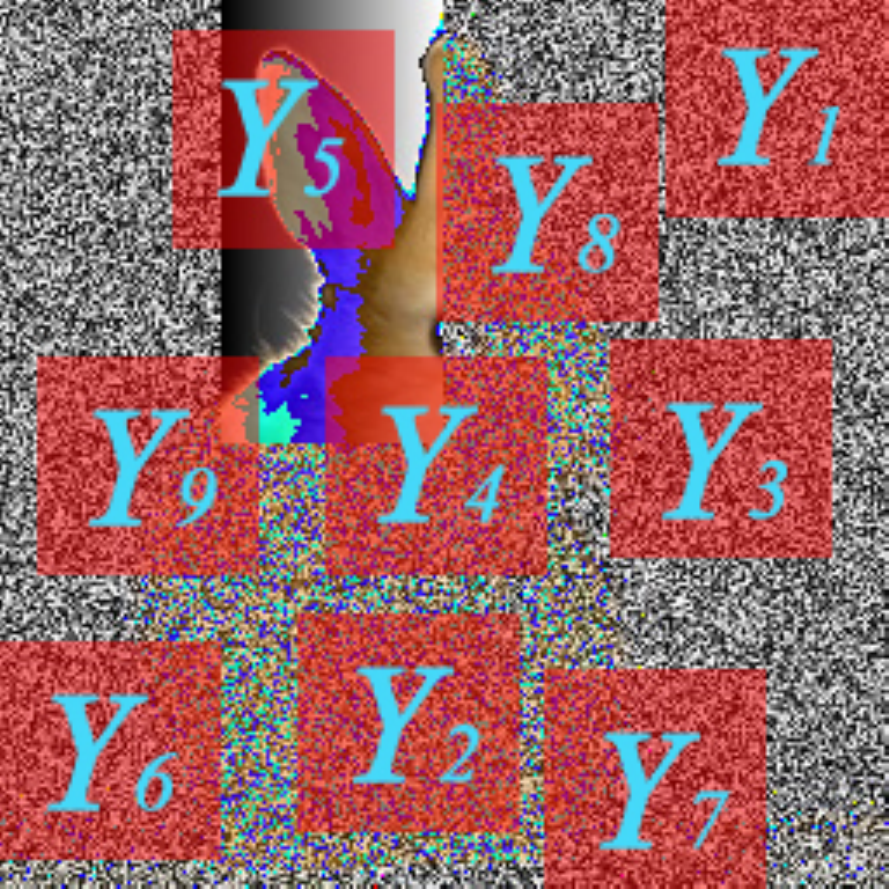}}
     \centerline{\scriptsize(b) Step 1}
      \vspace{0.12cm}
    \end{minipage}\hfill
    \begin{minipage}[b]{0.24\linewidth}
      \centering
     \centerline{\includegraphics[width=\linewidth]{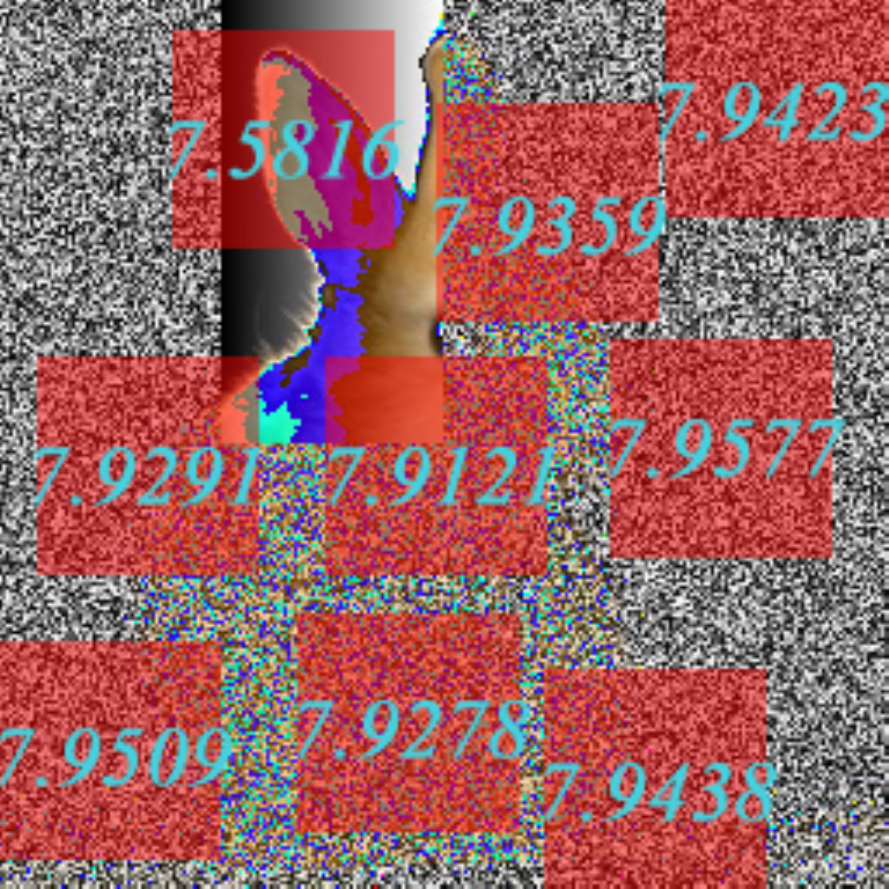}}
     \centerline{\scriptsize(c) Step 2}
      \vspace{0.12cm}
    \end{minipage}\hfill
    \begin{minipage}[b]{0.243\linewidth}
      \centering
     \centerline{\includegraphics[width=\linewidth]{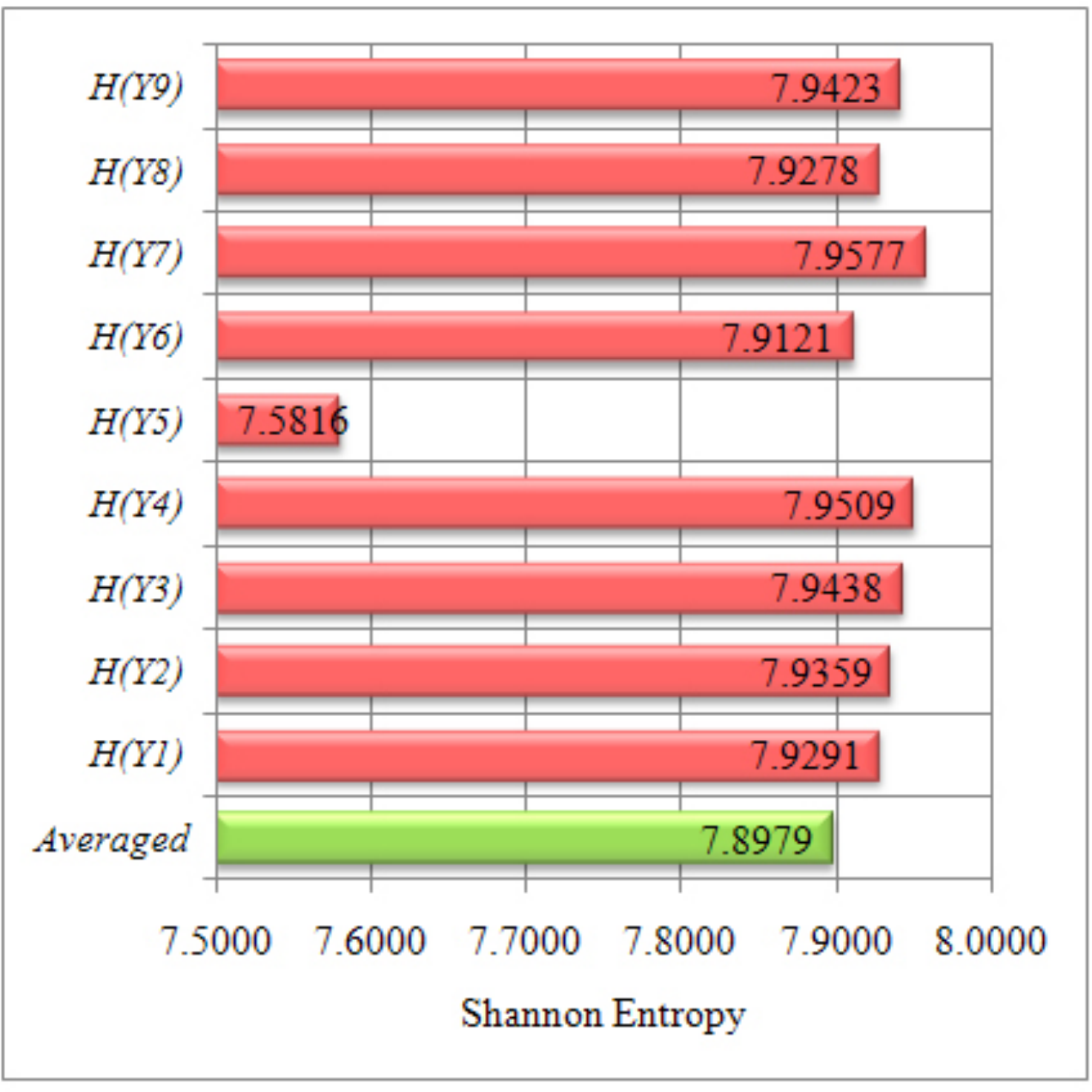}}
     \centerline{\scriptsize(d) Step 3}
      \vspace{0.12cm}
    \end{minipage}\hfill
    \caption{An example of the block entropy test}
\end{figure}

Fig. 2 shows the three steps of block entropy test on the previous encrypted rabbit image. It is clear that the new block entropy test on $\overline{H_K}$ deserves considerations for testing image randomness, because:
\begin{itemize}
    \item Randomly selected and non-overlapped $K$ regions over the entire image give a fair coverage on a two-dimensional image;
    \item The compact definition of $\overline{H_K}$ ensures a light computational cost;
    \item $\overline{H_K}$ is applicable to both permutation and substitution based encryption methods.
\end{itemize}

In order to make $\overline{H_K}$ a meaningful statistic that can be used to judge the randomness in an image, $\overline{H^*_K}$ from the ideally encrypted image has to be derived. In the remainder of this section, the theoretical mean and variance of $\overline{H^*_K}$ are given as a function of the block size $MN$ and of the number of allowed intensity scales $L$.
\subsubsection{Theoretical Mean and Variance of $\overline{H^*_K}$}
In this section, $Y$ is considered as an image block at size $M$-by-$N$ with $L$ intensity scales from an ideally encrypted image. Then $\forall i\in\{1,2,...,K\}$ $Y_i$ is an observation of $Y$. Since none of two tested image blocks are overlapped, it is reasonable to assume they are independent as well. As a result, $H(Y_1),H(Y_2),...,H(Y_K)$ are independently and identically distributed about $f_{H(Y)}$, which is the PDF of $H(Y)$. Hence, as long as $K$ is sufficiently large, the theoretical mean and variance of $\overline{H^*_K}$ is linked to $H(Y)$ via Eqn. (\ref{CLT}). Therefore, the first task is to find the mean and the variance of $H(Y)$.
%%%%%%%%%%%%%%%%%%%%%%%%%%%%%%%%%%%%%%%%%%%%%%%%%%%%%%%%%%%%%%%%%%%%%%%%%%%%%%%%%%%%%
%%%%%%%%%%%%%%%%%%%%%%%%%%%%%%%%%%%%%%%%%%%%%%%%%%%%%%%%%%%%%%%%%%%%%%%%%%%%%%%%%%%%%
%                       Proof of Binomial Distribution
%%%%%%%%%%%%%%%%%%%%%%%%%%%%%%%%%%%%%%%%%%%%%%%%%%%%%%%%%%%%%%%%%%%%%%%%%%%%%%%%%%%%%
%%%%%%%%%%%%%%%%%%%%%%%%%%%%%%%%%%%%%%%%%%%%%%%%%%%%%%%%%%%%%%%%%%%%%%%%%%%%%%%%%%%%%

\begin{mylemma}
     Let $\aleph(l)$ be the number of pixels in $Y$ at intensity scale $l$, then $\aleph(l)$ follows a binomial distribution associated with $MN$ trails with $1/L$ success possibility, i.e.$\aleph(l)\sim {\cal B}i(MN,1/L)$.
\end{mylemma}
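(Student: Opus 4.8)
The plan is to realize $\aleph(l)$ as a sum of independent Bernoulli indicators and then invoke the standard fact that a sum of i.i.d.\ Bernoulli trials is binomial. The only hypothesis I will need is Definition~1: every pixel $p\in Y$ satisfies $p\sim{\cal U}[0,L-1]$, and the pixels are mutually independent since $Y$ is a block of an ideally encrypted (true random) image.

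First I would enumerate the $MN$ pixels of the block $Y$ as $p_1,\ldots,p_{MN}$ and, for the fixed intensity scale $l$, attach to each pixel $p_k$ an indicator random variable $B_k=\mathbf{1}\{p_k=l\}$, equal to $1$ when $p_k$ takes value $l$ and $0$ otherwise. Because each pixel is uniformly distributed over the $L$ scales $\{0,1,\ldots,L-1\}$, the event $\{p_k=l\}$ has probability exactly $1/L$, so each $B_k$ is a Bernoulli trial with success probability $1/L$. Next I would note that the $B_k$ inherit independence from the pixels: since $p_1,\ldots,p_{MN}$ are independent, so are the functions $B_k$ of them. Finally, by the very definition of $\aleph(l)$ as the count of pixels at scale $l$, we have $\aleph(l)=\sum_{k=1}^{MN}B_k$, i.e.\ the number of successes in $MN$ independent Bernoulli$(1/L)$ trials, which is precisely a binomial random variable ${\cal B}i(MN,1/L)$.

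There is no genuinely hard step here; the statement is essentially a translation of Definition~1 into the language of the binomial distribution. The only point requiring a little care is to confirm the two structural assumptions explicitly — identical success probability $1/L$ across pixels (from the uniform marginal) and independence of the indicators (from independence of the pixels) — since the binomial conclusion fails if either is dropped. Establishing these cleanly is what matters, because the whole purpose of the lemma is to supply a clearly justified distribution for each $\aleph(l)$, on which the subsequent derivation of the mean and variance of $H(Y)$ will rest.
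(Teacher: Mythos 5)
Your proof is correct and takes essentially the same route as the paper's: the paper likewise observes that $Pr(p=l)=1/L$ makes each pixel a Bernoulli$(1/L)$ trial and then concludes $\aleph(l)\sim{\cal B}i(MN,1/L)$ over the $MN$ pixels. Your version is merely more explicit, writing $\aleph(l)=\sum_{k=1}^{MN}B_k$ with indicator variables and spelling out the independence inherited from Definition~1, a step the paper leaves implicit.
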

\begin{proof}
    \begin{tabbing}
        \hspace{36pt} \=$\because Pr(p=l |p\in Y)=1/L$\\
        \>$\therefore Pr(p\neq l |p\in Y)=(L-1)/L$\\
        \>$\therefore $ The random variable for a pixel $p$ at intensity level $l$ follows a Bernoulli distribution with success\\
        \>\hspace{10pt}possibility $1/L$.\\
        \>$\therefore $ For a $M$-by-$N$ image block $Y$, $\aleph(l)\sim {\cal B}i(MN,1/L)$, that is:
    \end{tabbing}
    \begin{center}
        $Pr(\aleph(l)=n)=\frac{MN!}{n!(MN-n)!}\frac{(L-1)^{MN-n}}{L^{MN}}$
    \end{center}
\end{proof}
%%%%%%%%%%%%%%%%%%%%%%%%%%%%%%%%%%%%%%%%%%%%%%%%%%%%%%%%%%%%%%%%%%%%%%%%%%%%%%%%%%%%%
%                       Corollary
%%%%%%%%%%%%%%%%%%%%%%%%%%%%%%%%%%%%%%%%%%%%%%%%%%%%%%%%%%%%%%%%%%%%%%%%%%%%%%%%%%%%%
\begin{mycol}
\label{colP(l)}
    $Pr(P_l=n/MN)=\frac{MN!}{n!(MN-n)!}\frac{(L-1)^{MN-n}}{L^{MN}}$%
   % $Pr(Pr(Y=l)=n/MN)=\frac{MN!}{n!(MN-n)!}\frac{(L-1)^{MN-n}}{L^{MN}}$
\end{mycol}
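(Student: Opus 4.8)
The plan is to derive this corollary directly from Lemma 1 by exploiting the purely deterministic relationship between the pixel-count random variable $\aleph(l)$ and the empirical probability $P_l$. Recall from the image-entropy setup preceding Eqn. (\ref{eqnImgEntropy}) that $P_l = \aleph(l)/MN$, where the block size $MN$ is a fixed constant (the blocks $Y_i$ are selected to have a prescribed $M$-by-$N$ size) rather than a random quantity. Consequently the map $\aleph(l)\mapsto\aleph(l)/MN$ is merely a strictly monotone rescaling of the support of $\aleph(l)$, i.e.\ a bijection between the attainable values $\{0,1,\ldots,MN\}$ of $\aleph(l)$ and the attainable values $\{0,1/MN,\ldots,1\}$ of $P_l$.

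First I would observe that, because of this bijection, the two events coincide: $\{P_l = n/MN\} = \{\aleph(l) = n\}$, since $P_l = n/MN$ holds if and only if $\aleph(l)=n$. This is a set equality on the underlying sample space, so the two events carry identical probability. I would then substitute to get $Pr(P_l = n/MN) = Pr(\aleph(l)=n)$ and invoke Lemma 1, which supplies the binomial mass $Pr(\aleph(l)=n) = \frac{MN!}{n!(MN-n)!}\frac{(L-1)^{MN-n}}{L^{MN}}$. Reading off the right-hand side yields the claimed formula immediately.

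There is essentially no obstacle in this argument; the corollary is a one-line consequence of Lemma 1. The only point that merits a moment of explicit care is confirming that $MN$ is genuinely a deterministic normalizing constant, so that dividing the random variable $\aleph(l)$ by it neither changes the event structure nor alters the probabilities. Once that is noted, no further computation is required.
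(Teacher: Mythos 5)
Your proposal is correct and follows essentially the same route as the paper: both arguments use the identity $P_l=\aleph(l)/MN$ with $MN$ a fixed constant to equate the events $\{P_l=n/MN\}$ and $\{\aleph(l)=n\}$, then read off the binomial mass from Lemma 1. Your added remark that $MN$ must be deterministic for this rescaling to preserve probabilities is a point the paper leaves implicit, but it is the same one-line proof.
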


\begin{proof}
    \begin{tabbing}
        \hspace{36pt}\=$\because P_l=Pr(Y=l)=\aleph(l)/MN$ and $\aleph(l)\sim {\cal B}i(MN,1/L)$\\
        \>$\therefore Pr(\aleph(l)=n)=Pr(MN\cdot P_l=n)=Pr(P_l=n/MN)=\frac{MN!}{n!(MN-n)!}\frac{(L-1)^{MN-n}}{L^{MN}}$
    \end{tabbing}
\end{proof}
%%%%%%%%%%%%%%%%%%%%%%%%%%%%%%%%%%%%%%%%%%%%%%%%%%%%%%%%%%%%%%%%%%%%%%%%%%%%%%%%%%%%%
%%%%%%%%%%%%%%%%%%%%%%%%%%%%%%%%%%%%%%%%%%%%%%%%%%%%%%%%%%%%%%%%%%%%%%%%%%%%%%%%%%%%%
%                       Proof of Multi-nomial Distribution
%%%%%%%%%%%%%%%%%%%%%%%%%%%%%%%%%%%%%%%%%%%%%%%%%%%%%%%%%%%%%%%%%%%%%%%%%%%%%%%%%%%%%
%%%%%%%%%%%%%%%%%%%%%%%%%%%%%%%%%%%%%%%%%%%%%%%%%%%%%%%%%%%%%%%%%%%%%%%%%%%%%%%%%%%%%
\begin{mytheorem}
\label{TheMultinomial}
    If $\sum\limits^{L-1}_{l=0}n_l = MN$, then $Pr(\aleph(0)=n_0$, $\aleph(1)=n_1, ...$, $\aleph(L-1)=n_{L-1})= {MN!}/({n_0!...n_{L-1}!}{L^{MN}})$
\end{mytheorem}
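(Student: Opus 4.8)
The plan is to recognize that the joint law of the occupancy counts $(\aleph(0),\ldots,\aleph(L-1))$ is exactly a multinomial distribution, and to establish the stated formula by a direct combinatorial counting argument rather than by iterating the binomial result of Lemma 1. The starting point is Definition 1: in a true random image each pixel $p\in Y$ is an independent draw from the uniform distribution on $\{0,1,\ldots,L-1\}$, so the $MN$ pixels of the block $Y$ constitute $MN$ independent trials, each of which falls into one of $L$ equally likely cells with probability $1/L$. Since $\aleph(l)$ is simply the number of pixels landing in cell $l$, proving the theorem amounts to computing a multinomial occupancy probability.

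First I would fix an admissible target vector $(n_0,\ldots,n_{L-1})$ with $\sum_l n_l = MN$ and consider the \emph{elementary outcomes}, i.e. complete assignments of an intensity value to each of the $MN$ distinguishable pixels. Second, I would argue that every elementary outcome whose occupancy vector equals $(n_0,\ldots,n_{L-1})$ carries the same probability: by independence of the pixels its probability is the product of the per-pixel probabilities, and since each factor equals $1/L$ and there are $MN$ of them, this probability is $(1/L)^{MN}=1/L^{MN}$; crucially it depends only on the fact that $\sum_l n_l = MN$ and not on which particular configuration is chosen. Third, I would count how many such outcomes exist: choosing which $n_0$ pixels receive value $0$, then which $n_1$ of the remaining $MN-n_0$ pixels receive value $1$, and so on, yields the multinomial coefficient $MN!/(n_0!\,n_1!\cdots n_{L-1}!)$ once the intermediate factorials telescope. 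Finally, summing the common probability $1/L^{MN}$ over these mutually exclusive, equiprobable outcomes gives $Pr(\aleph(0)=n_0,\ldots,\aleph(L-1)=n_{L-1}) = MN!/(n_0!\cdots n_{L-1}!\,L^{MN})$, which is exactly the claim.

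The argument has no genuinely hard step; the one point deserving care is justifying that $MN!/(n_0!\cdots n_{L-1}!)$ is precisely the number of distinct assignments producing the prescribed occupancy vector. I would present this either as the telescoping product of choices $\frac{MN!}{n_0!(MN-n_0)!}\cdot\frac{(MN-n_0)!}{n_1!(MN-n_0-n_1)!}\cdots$, whose intermediate factorials cancel to leave $MN!/(n_0!\cdots n_{L-1}!)$, or equivalently by a short induction on $L$ exploiting the conditional structure already implicit in Lemma 1: conditioning on $\aleph(0)=n_0$ leaves $MN-n_0$ independent uniform pixels over the remaining $L-1$ levels. I would lead with the direct counting version, since it is self-contained, avoids repeated conditioning, and makes transparent that the answer collapses to the clean factor $1/L^{MN}$ exactly because the exponents $n_l$ sum to $MN$.
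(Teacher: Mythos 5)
Your proof is correct, but it takes a genuinely different route from the paper's. The paper builds directly on Lemma 1 and the chain rule of probability: it factors $Pr(\aleph(0)=n_0,\ldots,\aleph(L-1)=n_{L-1})$ into a product of conditional probabilities, where each factor is a binomial probability for $\aleph(i)$ given the counts of the lower intensity levels (conditioned on those counts, the remaining $MN-\sum_{j<i}n_j$ pixels are uniform over the remaining $L-i$ levels), and then collapses the telescoping product
$\prod_{i=0}^{L-2}\frac{(MN-\sum_{j<i}n_j)!}{n_i!\,(MN-\sum_{j\le i}n_j)!}\cdot\frac{(L-i-1)^{MN-\sum_{j\le i}n_j}}{(L-i)^{MN-\sum_{j<i}n_j}}$
to $\frac{MN!}{n_0!\cdots n_{L-1}!}\frac{1}{L^{MN}}$. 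You instead bypass Lemma 1 entirely and argue at the level of elementary outcomes: each of the $L^{MN}$ complete pixel assignments is equiprobable with probability $L^{-MN}$ by Definition 1, and the multinomial coefficient counts those assignments realizing the prescribed occupancy vector. Your approach is more elementary and self-contained, and it makes the factor $1/L^{MN}$ appear for a transparent structural reason rather than emerging from algebraic cancellation; it is essentially the textbook derivation of the multinomial distribution. The paper's approach, by contrast, ties the theorem explicitly to Lemma 1 and exhibits the conditional structure of the counts, which is the same structure you mention as your fallback induction on $L$; a minor caveat is that the paper's intermediate chain-rule lines are written loosely (some factors appear unconditioned where conditioning is intended), whereas the telescoping cancellation you invoke for the counting coefficient is exactly the cancellation the paper performs on probabilities. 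Both proofs are valid and yield the identical formula.
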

\begin{proof}
    \begin{tabbing}
        \hspace{36pt}\= Denote event of $\aleph(i)=j$ as $E_i^j$ .\\
        \>$\because$ Lemma 1 holds for an arbitrary scale $l\in\{0,1,...,L-1\}$\\
        \>$\therefore$ $Pr(\aleph(0)=n_0$, \= $\aleph(1)=n_1 ,...$, $\aleph(L-1)=n_{L-1})$\\
        \>$=Pr(E_0^{n_0},E_1^{n_1},...,E_{L-1}^{n_{L-1}})$\\
        \>$=Pr(E_0^{n_0})\cdot Pr(E_1^{n_1},...,E_{L-1}^{n_{L-1}}|E_0^{n_0})$\\
        \>$=Pr(E_0^{n_0})\cdot Pr(E_1^{n_1})\cdot Pr(E_2^{n_2},...,E_{L-1}^{n_{L-1}}|E_0^{n_0},E_1^{n_1})$\\
        \>$\vdots$\\
        \>$=Pr(E_{L-1}^{n_{L-1}}|E_0^{n_0},E_1^{n_1},...,E_{L-1}^{n_{L-1}})\cdot\prod\limits^{L-2}_{i=0}Pr(E_i^{n_i})
        =1\cdot\prod\limits^{L-2}_{i=0}Pr(E_i^{n_i})$\\
        \>$=\prod\limits^{L-2}_{i=0}{\frac{(MN-\sum\limits^{i-1}_{j=0}{n_j})!}{n_i!(MN-\sum\limits^{i}_{j=0}{n_j})!}}
        \frac{(L-i-1)^{MN-\sum\limits^{i}_{j=0}n_j}}{(L-i)^{MN-\sum\limits^{i-1}_{j=0}n_j}}$\\
        \>$=\frac{MN!}{n_0!(MN-n_0)!}\frac{(L-1)^{MN-n_0}}{L^{MN}}\cdot\frac{(MN-n_0)!}{n_1!(MN-n_0-n_1)!}\frac{(L-2)^{MN-n_0-n_1}}{(L-1)^{MN-n_0}}\ldots
        \frac{(n_{L-2}+n_{L-1})!}{n_{L-2}!n_{L-1}!}\frac{(1)^{n_{\tiny L-1}}}{2^{n_{L-2}+n_{L-1}}}$\\
        \>$= \frac{MN!}{n_0!...n_{L-1}!}\frac{1}{L^{MN}}$
%        \>$= Pr(\aleph(0)=n_0)$ $Pr(\aleph(1)=n_1 ...$, $\aleph(L-1)=n_{L-1}|$ $\aleph(0)=n_0)$\\
%        \>$=\frac{MN!}{n_0!(MN-n_0)!}\frac{(L-1)^{MN-n_0}}{L^{MN}}Pr(\aleph(1)=n_1, ...$, $\aleph(L-1)=n_{L-1}|$ $\aleph(0)=n_0)$\\
%        \>$=\frac{MN!}{n_0!(MN-n_0)!}\frac{(L-1)^{MN-n_0}}{L^{MN}}$$Pr(\aleph(1)=n_1)$ $Pr(\aleph(2)=n_2...$,
%        $\aleph(L-1)=n_{L-1}|$ $\aleph(0)=n_0, \aleph(1)=n_1)$\\
%        \>$=\frac{MN!}{n_0!(MN-n_0)!}\frac{(L-1)^{MN-n_0}}{L^{MN}}$$\frac{(MN-n_0)!}{n_1!(MN-n_0-n_1)!}\frac{(L-2)^{MN-n_0}}{(L-1)^{MN}}$$Pr(\aleph(2)=n_2,...$ , $\aleph(L-1)=n_{L-1}|$ $\aleph(0)=n_0, \aleph(1)=n_1)$\\
%        \>$=\frac{MN!}{n_0!n_1!(MN-n_0-n_1)!}\frac{(L-2)^{MN-n_0-n_1}}{L^{MN}}$$Pr(\aleph(2)=n_2,...$ , $\aleph(L-1)=n_{L-1}|$ $\aleph(0)=n_0, \aleph(1)=n_1)$\\
%        \>$\vdots$\\
%
    \end{tabbing}
\end{proof}
%%%%%%%%%%%%%%%%%%%%%%%%%%%%%%%%%%%%%%%%%%%%%%%%%%%%%%%%%%%%%%%%%%%%%%%%%%%%%%%%%%%%%
%                       Remark
%%%%%%%%%%%%%%%%%%%%%%%%%%%%%%%%%%%%%%%%%%%%%%%%%%%%%%%%%%%%%%%%%%%%%%%%%%%%%%%%%%%%%
\begin{myremark}
    Theorem \ref{TheMultinomial} can be directly used to calculate $H(Y)$. Because the joint distribution of $f({\tiny\aleph(0),\aleph(1),,...,\aleph(L-1)})$ is given and $\forall l\in\{0,1,...,L-1\},\exists P_l=\aleph(l)/MN$, the joint distribution of $f(P_0,P_1,...,P_{L-1}))$ is also known. Therefore, the distribution of information entropy of an image block $H(Y)=-\sum\limits^{L-1}_{l=0}P_l\log_2{P_l}$ can be also obtained.
\end{myremark}
%%%%%%%%%%%%%%%%%%%%%%%%%%%%%%%%%%%%%%%%%%%%%%%%%%%%%%%%%%%%%%%%%%%%%%%%%%%%%%%%%%%%%
%                      Corollary
%%%%%%%%%%%%%%%%%%%%%%%%%%%%%%%%%%%%%%%%%%%%%%%%%%%%%%%%%%%%%%%%%%%%%%%%%%%%%%%%%%%%%
\begin{mycol}
    If Y is a binary image, namely $L=2$, then Theorem \ref{TheMultinomial} degrades to
    \begin{center}
        $Pr(\aleph(0)=n_0$, $\aleph(1)=n_1)= {MN!}/({n_0!n_{1}!}{2^{MN}})$
    \end{center}
\end{mycol}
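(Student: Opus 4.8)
The plan is to obtain the statement by directly specializing Theorem \ref{TheMultinomial} to the two-scale case, since a binary image is exactly the instance $L=2$ of the general multinomial model. First I would observe that when $L=2$ the only admissible intensity scales are $l\in\{0,1\}$, so the joint event in Theorem \ref{TheMultinomial} involves just the two counters $\aleph(0)$ and $\aleph(1)$, and the defining constraint $\sum_{l=0}^{L-1}n_l=MN$ collapses to $n_0+n_1=MN$. In particular the two counts cease to be free of one another: once $n_0$ is fixed, $n_1$ is forced, which is consistent with the requirement that all $MN$ pixels be accounted for.

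Next I would simply carry the substitution $L=2$ through the right-hand side of Theorem \ref{TheMultinomial}. The factorial product $n_0!\cdots n_{L-1}!$ in the denominator shortens to $n_0!\,n_1!$, while the normalizing factor $L^{MN}$ becomes $2^{MN}$. This yields $Pr(\aleph(0)=n_0,\aleph(1)=n_1)=MN!/(n_0!\,n_1!\,2^{MN})$, exactly the claimed expression. Because the corollary is a pure special case, no new estimate or inequality is required and there is no genuine obstacle; the only point that needs care is to keep the pixel-count constraint $n_0+n_1=MN$ in force, so that the formula is evaluated only on a valid pixel configuration.

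Finally, as an independent sanity check I would re-derive the same formula from Lemma 1 and Corollary \ref{colP(l)} without invoking the full multinomial theorem. For $L=2$ the single-scale count satisfies $\aleph(0)\sim{\cal B}i(MN,1/2)$, and since $\aleph(1)=MN-\aleph(0)$ is then determined, the joint probability reduces to the binomial mass $\frac{MN!}{n_0!(MN-n_0)!}(1/2)^{MN}=MN!/(n_0!\,n_1!\,2^{MN})$ with $n_1=MN-n_0$. Agreement between the two routes confirms the specialization and gives confidence that the factor bookkeeping in Theorem \ref{TheMultinomial} is correct.
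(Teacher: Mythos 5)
Your proposal is correct and matches the paper's approach: the paper dismisses this corollary with a one-word proof (``Straightforward''), meaning exactly the direct substitution $L=2$ into Theorem \ref{TheMultinomial} that you carry out, with the constraint $n_0+n_1=MN$ kept in force. Your added cross-check via Lemma 1 (that $\aleph(0)\sim{\cal B}i(MN,1/2)$ with $\aleph(1)$ determined) is a harmless bonus confirming the same formula, but it is not needed and the paper does not include it.
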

\begin{proof}
Straightforward.
\end{proof}

%%%%%%%%%%%%%%%%%%%%%%%%%%%%%%%%%%%%%%%%%%%%%%%%%%%%%%%%%%%%%%%%%%%%%%%%%%%%%%%%%%%%%
%%%%%%%%%%%%%%%%%%%%%%%%%%%%%%%%%%%%%%%%%%%%%%%%%%%%%%%%%%%%%%%%%%%%%%%%%%%%%%%%%%%%%
%                       Derivation of Mean and Var H(l)
%%%%%%%%%%%%%%%%%%%%%%%%%%%%%%%%%%%%%%%%%%%%%%%%%%%%%%%%%%%%%%%%%%%%%%%%%%%%%%%%%%%%%
%%%%%%%%%%%%%%%%%%%%%%%%%%%%%%%%%%%%%%%%%%%%%%%%%%%%%%%%%%%%%%%%%%%%%%%%%%%%%%%%%%%%%
Because the statistics of interest are the mean and variance of $H(Y)$ but not its distribution, an easier way to derive them is to first find $h(P_l)$ in Eqn. (\ref{eqnLevelEntropy}) and then to calculate the mean and variance of $H(Y)$ via Eqn. (\ref{eqnSumEntropy}).

%%%%%%%%%%%%%%%%%%%%%%%%%%%%%%%%%%%%%%%%%%%%%%%%%%%%%%%%%%%%%%%%%%%%%%%%%%%%%%%%%%%%%
\begin{equation}
\label{eqnLevelEntropy}
    h(P_l) =-P_l\log_2P_l
\end{equation}

\begin{equation}
\label{eqnSumEntropy}
    H(Y)=\sum\limits^{L-1}_{l=0}h(P_l)
\end{equation}
Using Corollary \ref{colP(l)} and Eqn. (\ref{eqnLevelEntropy}), the following expectation values can be obtained
\begin{equation}
\label{eqnE[h(l)]}
    E[h(P_l)]
    %=\sum\limits^{MN}_{n=0}\frac{n}{MN}log_2\frac{MN}{n}\cdot Pr(Pr(Y=l)=n/MN)
    =\sum\limits^{MN}_{n=0}\frac{n}{MN}\log_2\frac{MN}{n}\cdot\frac{MN!(L-1)^{MN-n}}{n!(MN-n)!L^{MN}}
\end{equation}
\begin{equation}
\label{eqnE[h(l)^2]}
    E[h(P_l)^2]
    =\sum\limits^{MN}_{n=0}(\frac{n}{MN}\log_2\frac{MN}{n})^2\cdot\frac{MN!(L-1)^{MN-n}}{n!(MN-n)!L^{MN}}
\end{equation}
\begin{equation}
\label{eqnE[h(l1l2)]}
    E[h(P_{l_1})h(P_{l_2})]
    =\sum\limits^{MN}_{n_1=0}
    \sum\limits^{MN-n_1}_{n_2=0}
    (\frac{n_1}{MN}\log_2\frac{MN}{n_1})(\frac{n_2}{MN}\log_2\frac{MN}{n_2})
    \cdot\frac{MN!(L-2)^{MN-n_1-n_2}}{n_1!n_2!(MN-n_1-n_2)!L^{MN}}
\end{equation}
%%%%%%%%%%%%%%%%%%%%%%%%%%%%%%%%%%%%%%%%%%%%%%%%%%%%%%%%%%%%%%%%%%%%%%%%%%%%%%%%%%%%%
%                       H[Y]
%%%%%%%%%%%%%%%%%%%%%%%%%%%%%%%%%%%%%%%%%%%%%%%%%%%%%%%%%%%%%%%%%%%%%%%%%%%%%%%%%%%%%
Therefore, $H(Y)$'s first and second moments, mean $\mu_H$ and $E[H(Y)^2]$ , can be obtained via Eqns. (\ref{eqnMeanH[Y]}) and (\ref{eqnMeanH[Y]^2}), respectively. Finally, its variance $\sigma_H^2$ can be obtained via Eqn. (\ref{eqnVarH[Y]}).
\begin{equation}
\label{eqnMeanH[Y]}
    \mu_H=E[H(Y)]=E[\sum\limits^{L-1}_{l=0}h(P_l)]=\sum\limits^{L-1}_{l=0}E[h(P_l)]=L\cdot E[h(P_l)]
\end{equation}
\begin{eqnarray}
\label{eqnMeanH[Y]^2}
        E[H(Y)^2] &=&E[(\sum\limits^{L-1}_{l=0}h(P_l))^2]\nonumber \\
        &=&E[\sum\limits^{L-1}_{l=0}h(P_l)^2+\sum\limits^{L-1}_{l_1=0}\sum\limits^{L-1}_{l_2=0, l_2\neq l_1}h(P_{l_1})h(P_{l_2})]\\
        &=&L\cdot E[h(P_l)^2]+L(L-1)\cdot E[h(P_{l_1})h(P_{l_2})]\nonumber
\end{eqnarray}
\begin{eqnarray}
\label{eqnVarH[Y]}
    \sigma_H^2 &=&Var[H(Y)] =E[H(Y)^2]-E[H(Y)]^2 \nonumber \\
    &=&L\cdot E[h(P_l)^2]+L(L-1)\cdot E[h(P_{l_1})h(P_{l_2})]-L^2\cdot E[h(P_l)]^2
\end{eqnarray}
%%%%%%%%%%%%%%%%%%%%%%%%%%%%%%%%%%%%%%%%%%%%%%%%%%%%%%%%%%%%%%%%%%%%%%%%%%%%%%%%%%%%%
Therefore, an image block $Y$ at size $M$-by-$N$ within an ideally encrypted image $X$ with $L$ scales has the mean and variance of Shannon entropy as Eqns. (\ref{eqnMeanH[Y]}) and (\ref{eqnVarH[Y]}), respectively. It is clear that $\mu_H$ and $\sigma_H^2$ are determined as long as the size of $MN$ and the number of allowed scales $L$ are specified.

\begin{equation}
\label{eqnMeanHk}
    \mu_{H_K^*}=E[\overline{H^*_K}]=E[\sum\limits^{K}_{i=1}H(Y_i)]/K=\mu_H
\end{equation}
\begin{equation}
\label{eqnVarHk}
    \sigma_{H_K^*}^2=Var[\overline{H^*_K}]=Var[\sum\limits^{K}_{i=1}H(Y_i)/K]=\sigma_H^2/K
\end{equation}

Based on $\mu_H$ and $\sigma_H$, the theoretical mean and variance of $\overline{H^*_K}$ can be obtained as Eqns. (\ref{eqnMeanHk}) and (\ref{eqnVarHk}), respectively. Once a test encrypted image is given and the block entropy test is applied, the actual sample mean $\overline{H_K}$ is obtained. If the test image is ideally encrypted, then this actual sample mean should be comparable with the theoretical sample mean $\mu_{H^*_K}$, which is derived from the ideal case. Otherwise, the actual sample mean $\overline{H_K}$ should be lower than $\mu_{H^*_K}$.

\subsubsection{Numerical Mean and Variance of $\overline {H_K^*}$}
Numerically, the mean and variance of $H(Y)$ can be calculated via Eqns. (\ref{eqnMeanH[Y]}) and (\ref{eqnVarH[Y]}) as Table 1 shows. Consequently, the mean and variance of $\overline {H_K^*}$ are obtained via Eqns. (\ref{eqnMeanHk}) and (\ref{eqnVarHk}) according to the used number of sample blocks $K$.

\begin{table}[htbp]
\scriptsize
\caption{Numerical Results of $\mu_H$ and $\sigma_H$ of a image block encrypted by an ideal cipher}
\begin{center}
\begin{tabular}{|c|c|c|c|c|}
\hline
 & \multicolumn{ 2}{c|}{\textbf{Binary Image: $L=2$}} & \multicolumn{ 2}{c|}{\textbf{Gray Image: $L=256$}} \\ \hline
\textbf{\textit{size $M$-by-$N$}} & \textbf{$\mu_H$ } & \textbf{$\sigma_H$} & \textbf{$\mu_H$ } & \textbf{$\sigma_H$} \\ \hline
\textbf{$2$-by-$2$} & 0.7806390622 & 0.3077153752 & 1.9883002343 & 0.0760641186 \\ \hline
\textbf{$4$-by-$4$} & 0.9533616074 & 0.0661878066 & 3.9420646175 & 0.0828513507 \\ \hline
\textbf{$8$-by-$8$} & 0.9886389750 & 0.0160692363 & 5.7657169289 & 0.0766034388 \\ \hline
\textbf{$16$-by-$16$} & 0.9971767038 & 0.0039927774 & 7.1749663525 & 0.0524379986 \\ \hline
\textbf{$32$-by-$32$} & 0.9992952146 & 0.0009967175 & 7.8087565712 & 0.0172463431 \\ \hline
\hline
 & \multicolumn{ 2}{c|}{\textbf{Binary Image: $L=2$}} & \multicolumn{ 2}{c|}{\textbf{Gray Image: $L=256$}} \\ \hline
\textbf{\textit{size $M$-by-$N$}} & \textbf{$\mu_{H_K^*}$ } & \textbf{$\sigma_{H_K^*}$} & \textbf{$\mu_{H_K^*}$ } & \textbf{$\sigma_{H_K^*}$} \\ \hline
\textbf{$2$-by-$2$} & 0.7806390622 & 0.3077153752/$\sqrt{K}$ & 1.9883002343 & 0.0760641186/$\sqrt{K}$ \\ \hline
\textbf{$4$-by-$4$} & 0.9533616074 & 0.0661878066/$\sqrt{K}$ & 3.9420646175 & 0.0828513507/$\sqrt{K}$ \\ \hline
\textbf{$8$-by-$8$} & 0.9886389750 & 0.0160692363/$\sqrt{K}$ & 5.7657169289 & 0.0766034388/$\sqrt{K}$ \\ \hline
\textbf{$16$-by-$16$} & 0.9971767038 & 0.0039927774/$\sqrt{K}$ & 7.1749663525 & 0.0524379986/$\sqrt{K}$ \\ \hline
\textbf{$32$-by-$32$} & 0.9992952146 & 0.0009967175/$\sqrt{K}$ & 7.8087565712 & 0.0172463431/$\sqrt{K}$ \\ \hline
\end{tabular}
\end{center}
\label{tabNumH[l]}
\end{table}

From Table 1, it is clear that as the size of test block $MN$ increases, the expected block entropy increases quickly towards to its theoretical upper bound, while the variance of block entropy decreases quickly towards to zero. This conclusion is not surprising, because the intensity scale for each pixel is assumed to follow an uniform distribution ${\cal U}(0,L-1)$, so the more pixels in the test block, the higher the block entropy. Meanwhile, the intensity scale for each pixel is also assumed independently distributed, so the variance tends to decrease as the sample size increase. It is clear that if a test image has a high block entropy when the block size is small, it tends to have a higher entropy when this block size increases. However, the opposite claim does not always hold: a high entropy with a large block size is not necessary to imply a relative high entropy when this size decreases.

\begin{figure}[htb]
    \label{fig-GEntropy}
    \begin{minipage}[b]{0.23\linewidth}
      \centering
     \centerline{\includegraphics[width=\linewidth]{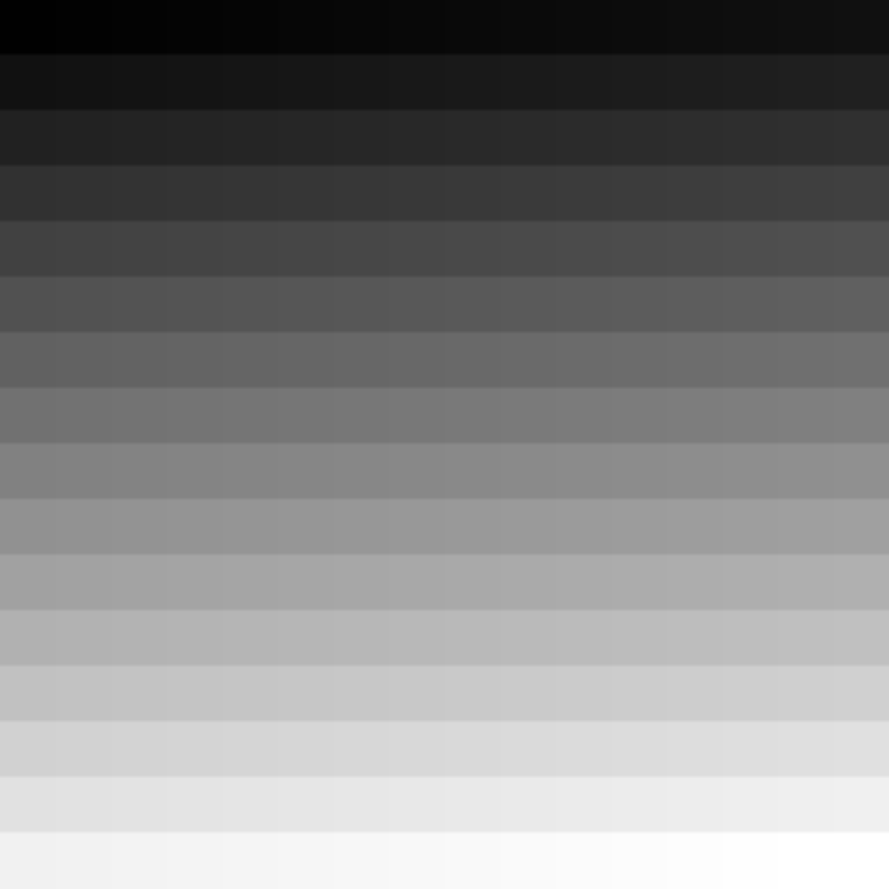}}
     \centerline{\small(a) Pattern}
      %\vspace{0.12cm}
    \end{minipage}\hfill
    \begin{minipage}[b]{0.23\linewidth}
      \centering
     \centerline{\includegraphics[width=\linewidth]{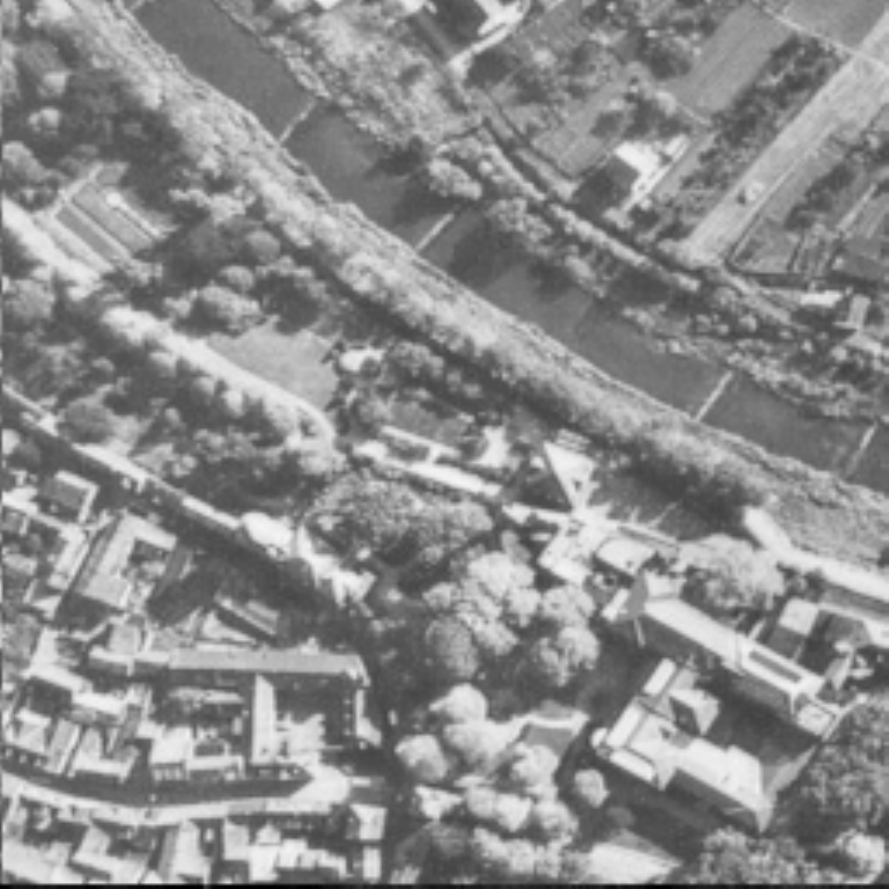}}
    \centerline{\small(b) Scene}
      %\vspace{0.12cm}
    \end{minipage}\hfill
    \begin{minipage}[b]{0.23\linewidth}
      \centering
     \centerline{\includegraphics[width=\linewidth]{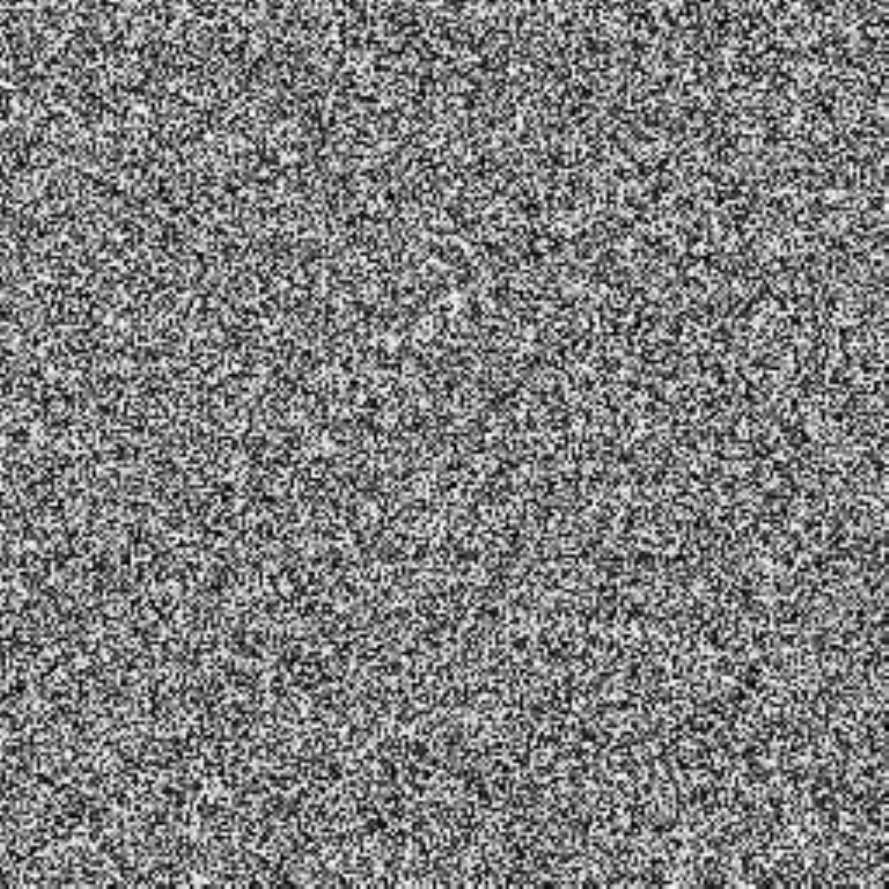}}
     \centerline{\small(c) Random}
      %\vspace{0.12cm}
    \end{minipage}\hfill
    \begin{minipage}[b]{0.27\linewidth}
      \centering
     \centerline{\includegraphics[width=\linewidth]{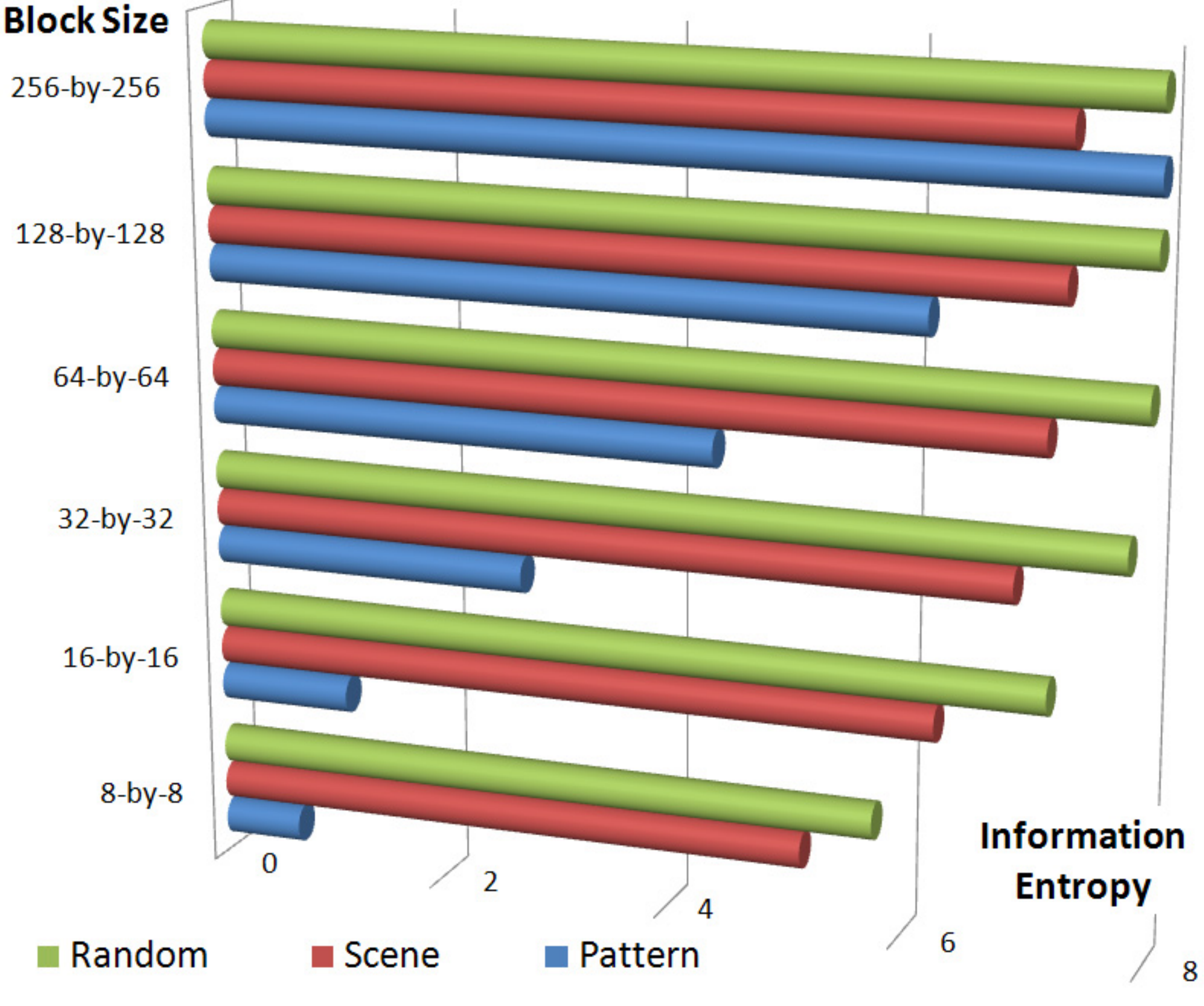}}
     \centerline{\small(d) Block Entropy Comparison}
      %\vspace{0.12cm}
    \end{minipage}\hfill
    \caption{Sample Images with Block Entropy Comparison}
\end{figure}
Fig. 3 shows three images, namely 'Pattern', 'Scene' and 'Random' with their block entropy results. It is obvious that the tendency of Shannon entropy is ascending as the block size increases. Although 'Pattern' reaches the upper bound of Shannon entropy at the block size of 256-by-256, its block entropy at smaller size is far lower than the other two test images. This result shows another example of why it is inappropriate to use global entropy for testing image randomness and thus for measuring image encryption quality.

In addition, from the comparison results in
Fig. 3-(d), it is noticeable that the block entropies of three test images get closer as the block size increases. On the other hand, from Table 1, too small block size may not be large enough to test the entropy. For example, a size of 2-by-2 block is obviously small for testing the entropy in a gray image, which has 256 intensity scales. Therefore, in order to obtain a good test results, an appropriate block size has to be determined.

Heuristically, we believed that $MN=256$ is a good block size for gray image $L=256$. %and that $MN=4$ is a good block size for binary image $L=2$.%
In the following section, the Z-test on $\overline{H_K}$ is derived for $L=256$ and $MN=256$.
\subsection{Z-Test on $\overline {H_K}$}
Conventional quantitative measurements are commonly not easy to follow. For example, how flat a histogram of the ciphertext is 'flat' enough? Is an entropy of 7.9999 close enough to the theoretical upper bound 8 or is 7.9911 the value that should be believed to be close enough? Therefore, it is desired to make a qualitative measurement rather than a quantitative for these cases. This section introduces a qualitative measurement for image randomness using the Z-test.

With results derived from an ideally encrypted image, we know that the sample mean $\overline {H_K^*}$ of Shannon entropy on $K$ image blocks follows the normal distribution ${\cal N}(\mu_{H_K^*},\sigma_{H_K^*}^2)$. However, in reality, the actual sample mean $\overline {H_K}$ may or may not follow the ideal distribution.
Without loss of generality, suppose the true mean of the block entropy for a given image is $\mu_{test}$. Therefore, the hypotheses, then, are:
\begin{itemize}
  \item Null Hypothesis ${\cal H}_0$: $\mu_{test}=\mu_{H_K^*}$ (the test image $X$ is ideally encrypted)
  \item Alternative Hypothesis ${\cal H}_1$: $\mu_{test}<\mu_{H_K^*}$ (the test image $X$ is not ideally encrypted)
\end{itemize}
Define the Z statistics for block entropy test as Eqn. (\ref{eqnZtestImg}) shows.
\begin{equation}
\label{eqnZtestImg}
    Z_K = \frac{\overline{H_K}-\mu_{H_K^*}}{\sigma_{H_K^*}}
\end{equation}

We interested in two things: (1) the critical value $H_c$ such that $\forall \overline{H_K}< H_c$ we reject ${\cal H}_0$ and $\forall \overline{H_K}\geq H_c$ we accept ${\cal H}_0$; and (2) the error rate of rejecting ${\cal H}_0$ when ${\cal H}_0$ is true, which is called Type I error with the definition in Eqn. (\ref{eqnAlpha}). It is clear that these two things are closely related for the smaller $H_c$ is used the smaller $\alpha$ is. Statistics tells us that given an $\alpha$ level of significance, the critical value $H_c$ can be found via Eqn. (\ref{eqnCutoff}), where $K$ is sufficiently large.
\begin{equation}
\label{eqnAlpha}
\alpha = Pr(\textit{  reject } {\cal H}_0 | {\cal H}_0 \textit{  is True})
\end{equation}
\begin{equation}
\label{eqnCutoff}
    H_c = \mu_{H_K^*}-\Phi^{-1}(\alpha)\cdot\sigma_{H_K^*}
\end{equation}
%Then as the number of samples increases, $\overline {H[K]}$ is approaching $\mu_{test}$ closer and closer, which is told by the Law of large numbers.
For any given $\alpha$, a corresponding critical value $H_c$ can be calculated via Eqn. (\ref{eqnCutoff}). Table \ref{tabCriticalI} shows critical values for deciding whether a test image is ideally encrypted or not when the block size 16-by-16 and $L = 256$.

\begin{table}[htbp]
\scriptsize
\caption{Critical Values of Hypothesis Test for Image Encryption ($MN=256$ and $L=256$)}
\begin{center}
\begin{tabular}{|c|c|c|c|c|c|}
\hline
\textbf{} & \textbf{$K=36$} & \textbf{$K=49$} & \textbf{$K=64$} & \textbf{$K=81$} & \textbf{$K=100$} \\ \hline
\textbf{$\alpha = 0.05$} & 7.1605908805 & 7.1626445194 & 7.1641847485 & 7.1653827045 & 7.1663410693 \\ \hline
\textbf{$\alpha = 0.01$} & 7.1546348481 & 7.1575393487 & 7.1597177242 & 7.1614120162 & 7.1627674499 \\ \hline
\textbf{$\alpha = 0.001$} & 7.147958753 & 7.1518169815 & 7.1547106529 & 7.1569612862 & 7.1587617928 \\ \hline\hline
\textbf{} & \textbf{$K=121$} & \textbf{$K=144$} & \textbf{$K=169$} & \textbf{$K=196$} & \textbf{$K=225$} \\ \hline
\textbf{$\alpha = 0.05$} & 7.167125186 & 7.1677786165 & 7.1683315193 & 7.1688054359 & 7.1692161637 \\ \hline
\textbf{$\alpha = 0.01$} & 7.163876441 & 7.1648006003 & 7.1655825813 & 7.1662528506 & 7.1668337508 \\ \hline
\textbf{$\alpha = 0.001$} & 7.1602349346 & 7.1614625528 & 7.1625013066 & 7.163391667 & 7.1641633127 \\ \hline\hline
\textbf{} & \textbf{$K=256$} & \textbf{$K=289$} & \textbf{$K=324$} & \textbf{$K=361$} & \textbf{$K=400$} \\ \hline
\textbf{$\alpha = 0.05$} & 7.1695755505 & 7.1698926565 & 7.1701745285 & 7.1704267298 & 7.1706537109 \\ \hline
\textbf{$\alpha = 0.01$} & 7.1673420384 & 7.1677905274 & 7.1681891844 & 7.1685458774 & 7.1688669012 \\ \hline
\textbf{$\alpha = 0.001$} & 7.1648385027 & 7.1654342586 & 7.1659638193 & 7.1664376369 & 7.1668640727 \\ \hline
\end{tabular}
\end{center}
\label{tabCriticalI}
\end{table}

It is worth while to note that the Eqn. (\ref{eqnCutoff}) holds only for a sufficiently large $K$. In other words, if the CLT is not satisfied, then Eqn. (\ref{eqnCutoff}) is inaccurate. Therefore, the last question for the proposed hypothesis test is when $K$ is sufficiently large such that the CLT can be applied. Unfortunately, this is a controversial question. Many statisticians believe that if the sample size is larger than 30 then $K$ is sufficiently large \cite{BarronStatistics}\cite{StatisticsAndData}, but some others suggest other sample sizes \cite{Statistics}, for example 100 \cite{Probability_Statistics}. If the sample size $K$ is considered to be sufficiently large, the contents in Table \ref{tabCriticalI} can be directly used.

% M. Sternstein, Statistics: Barron's Educational Series, 1996.
% R. Peck, C. Olsen, and J. L. Devore, Introduction to Statistics and Data Analysis: Cengage Learning, 2008
% L. A. Asimow, L. A. A. P. D. A. S. A. M. M. P. D. ASA, and M. M. Maxwell, Probability and Statistics with Applications: A Problem Solving Text: Actex Publications, 2010.
% 100
%D. C. LeBlanc, Statistics: concepts and applications for science: Jones and Bartlett, 2004.

A safer way to check how good or bad critical values in Table \ref{tabCriticalI} are is to apply the BET, which gives error bounds of approximating the sample mean with the standard normal distribution. Recall Eqn. (\ref{Korolev}), the value of $\rho/\sigma^3$ in our problem is found about 1.6 by using the Monte Carlo simulation for 100,000 trails. Substitute this value and the upper bound of $\cal C$, Eqn. (\ref{eqnBEsimplified}) is obtained.
\begin{equation}
\label{eqnBEsimplified}
|F_{Z_k}(z)-\Phi(z)|\leq 0.76544/\sqrt{K}
\end{equation}
With the BET, even if the sample size $K$ is not large enough to apply the CLT, it is still possible to use the CLT conclusion to estimate the sample mean. Because for any given value $H_c$ associated with an $\alpha$ in Table \ref{tabCriticalI}, Eqn. (\ref{eqnTable2}) always holds. As a result, the true type I error $\gamma$ of the hypothesis test for image encryption is obtained in Eqn. (\ref{eqnTable3}) .
\begin{equation}
\label{eqnTable2}
    \Phi(z<\frac{H_c-\mu_{H_K^*}}{\sigma_{H_K^*}}) = \alpha
\end{equation}
\begin{equation}
\label{eqnTable3}
\gamma =F_{Z_k}(z<\frac{H_c-\mu_{H_K^*}}{\sigma_{H_K^*}})\leq \frac{0.76544}{\sqrt{K}}+\Phi(z<\frac{H_c-\mu_{H_K^*}}{\sigma_{H_K^*}})=\frac{0.76544}{\sqrt{K}}+\alpha
\end{equation}
Table \ref{tabError} shows the $\gamma$ values corresponding to the critical values in Tabel \ref{tabCriticalI}. From the results, it is clear that the upper bound of Type I error when approximating the sample mean of block entropies with a normal distribution gets smaller as the sample size increases. Moreover, this upper bound tells how bad the proposed hypothesis test could be. For example, the critical value for $\alpha = 0.01$ and $K=100$ is $H_c = 7.1627674499$ in Tabel \ref{tabCriticalI}. The corresponding $\gamma$ for $\alpha = 0.01$ and $K=100$ is $\gamma = 0.08654$. Therefore, in the worst case, the possibility to reject ${\cal H}_0$ when it is true is 0.08654. It is worth to note that 0.08654 is the upper bound, which may not be actually reached. More likely, the CLT is applicable to this case and thus the error rate is 0.01.

\begin{table}[htbp]
\scriptsize
\caption{Upper Bounds of Type I Error $\gamma$ for the Block Entropy Test}
\begin{center}
\begin{tabular}{|c|c|c|c|c|c|}
\hline
\textbf{} & \textbf{$K=36$} & \textbf{$K=49$} & \textbf{$K=64$} & \textbf{$K=81$} & \textbf{$K=100$} \\ \hline
\textbf{$\alpha = 0.05$}  & 0.17757 & 0.15935 & 0.14568 & 0.13505 & 0.12654 \\ \hline
\textbf{$\alpha = 0.01$}  & 0.13757 & 0.11935 & 0.10568 & 0.09505 & 0.08654 \\ \hline
\textbf{$\alpha = 0.001$} & 0.12857 & 0.11035 & 0.09668 & 0.08605 & 0.07754 \\ \hline\hline
\textbf{} & \textbf{$K=121$} & \textbf{$K=144$} & \textbf{$K=169$} & \textbf{$K=196$} & \textbf{$K=225$} \\ \hline
\textbf{$\alpha = 0.05$}  & 0.11959 & 0.11379 & 0.10888 & 0.10467 & 0.10103 \\ \hline
\textbf{$\alpha = 0.01$} & 0.07959 & 0.07379 & 0.06888 & 0.06467 & 0.06103 \\ \hline
\textbf{$\alpha = 0.001$}  & 0.07059 & 0.06479 & 0.05988 & 0.05567 & 0.05203 \\ \hline\hline
\textbf{} & \textbf{$K=256$} & \textbf{$K=289$} & \textbf{$K=324$} & \textbf{$K=361$} & \textbf{$K=400$} \\ \hline
\textbf{$\alpha = 0.05$} & 0.09784 & 0.09503 & 0.09252 & 0.09029 & 0.08827 \\ \hline
\textbf{$\alpha = 0.01$}  & 0.05784 & 0.05503 & 0.05252 & 0.05029 & 0.04827 \\ \hline
\textbf{$\alpha = 0.001$} & 0.04884 & 0.04603 & 0.04352 & 0.04129 & 0.03927 \\ \hline
\end{tabular}
\end{center}
\label{tabError}
\end{table}
%%%%%%%%%%%%%%%%%%%%%%%%%%%%%%%%%%%%%%%%%%%%%%%%%%%%%%%%%%%%%%%%%%%%%%%%%%%%%%%%%%%%
%%%%%%%%%%%%%%%%%%%%%%%%%%%%%%%%%%%%%%%%%%%%%%%%%%%%%%%%%%%%%%%%%%%%%%%%%%%%%%%%%%%%
%                       Simulation Results
%%%%%%%%%%%%%%%%%%%%%%%%%%%%%%%%%%%%%%%%%%%%%%%%%%%%%%%%%%%%%%%%%%%%%%%%%%%%%%%%%%%%
%%%%%%%%%%%%%%%%%%%%%%%%%%%%%%%%%%%%%%%%%%%%%%%%%%%%%%%%%%%%%%%%%%%%%%%%%%%%%%%%%%%%
\section{Simulation Results}
Previous sections have already set up the block entropy test for image encryption. In this section, it is applied to measure the randomness of encrypted images both quantitatively and qualitatively.
\subsection{Measure the Quality of Image Shuffling}
Conventional image shuffling algorithms are very popular in encryption community for their compactness and fast encryption rate. Although simply shuffling pixels is insecure from the view point of cryptanalysis, the objective of this example is show how to use the block entropy test for measuring the quality of image shuffling, which is an impossible task for the global entropy test.

Generally speaking, a pixel-wise shuffling algorithm for image encryption is defined in Eqn. (\ref{eqnShuffling1}), where $S(i,j)$ indicates the pixel located at intersection of the $i$th row and the $j$th column in the shuffled image $S$; $e_\Pi^{rc}$ is a permutation for the row number and the column number. If additionally Eqn. (\ref{eqnShufflingRC}) holds, i.e. $e_\Pi^{rc}$ is a function separable with respect to the row variable and the column variable, then a pixel-wise shuffling degrades to a row-column-wise shuffling. As its definition implies, a shuffling algorithm does not change the intensity level for any pixel, but shuffles the positions of pixels. A shuffling algorithm can also be block cipher, which shuffles a certain size of image block at one time for fast computation.
\begin{eqnarray}
\label{eqnShuffling1}
S(i,j) = I(e_\Pi^{rc}(i,j))
\end{eqnarray}
\begin{eqnarray}
\label{eqnShufflingRC}
[e_\Pi^r(i),e_\Pi^c(j)] = e_\Pi^{rc}(i,j)
\end{eqnarray}

Fig. 4 shows examples of image encryption using shuffling algorithms. For simplicity, a shuffled image $I$ using pixel-wise shuffling scheme with a $M$-by-$N$ block size%of Eqn. (\ref{eqnShuffling1})
is denoted as ${\cal S}^{M\times N}_{p-w}(I)$. Similarly, a shuffled image using the row-column-wise scheme %of Eqn. (\ref{eqnShufflingRC}) with a $M$-by-$N$ block size
is denoted as ${\cal S}^{M\times N}_{r-c-w}(I)$. The original image is the \textit{binary} logo image of Tufts University at size of 256-by-256. It is noticeable that the row-column-wise shuffled images have a mesh-like pattern because the pixels are shuffled with respect to a row-wise shuffling followed by a column-wise shuffling. In other words, pixels in a row/column are still in a row/clumn after shuffling and the only difference is that the row/column number of these pixels are changed after shuffling. From this series of images, it is noticeable that from the point view of human visual inspection:
\begin{itemize}
    \item ${\cal S}^{M\times N}(I)$ gets more random-like as $M\times N$ increases
    \item ${\cal S}^{M\times N}_{p-w}(I)$ is more random-like than ${\cal S}^{M\times N}_{r-c-w}(I)$
\end{itemize}

\begin{figure}[htbp]
    \label{fig-Shuffling}
    \begin{minipage}[b]{.99\linewidth}
      \centering
     \centerline{\includegraphics[width=3cm]{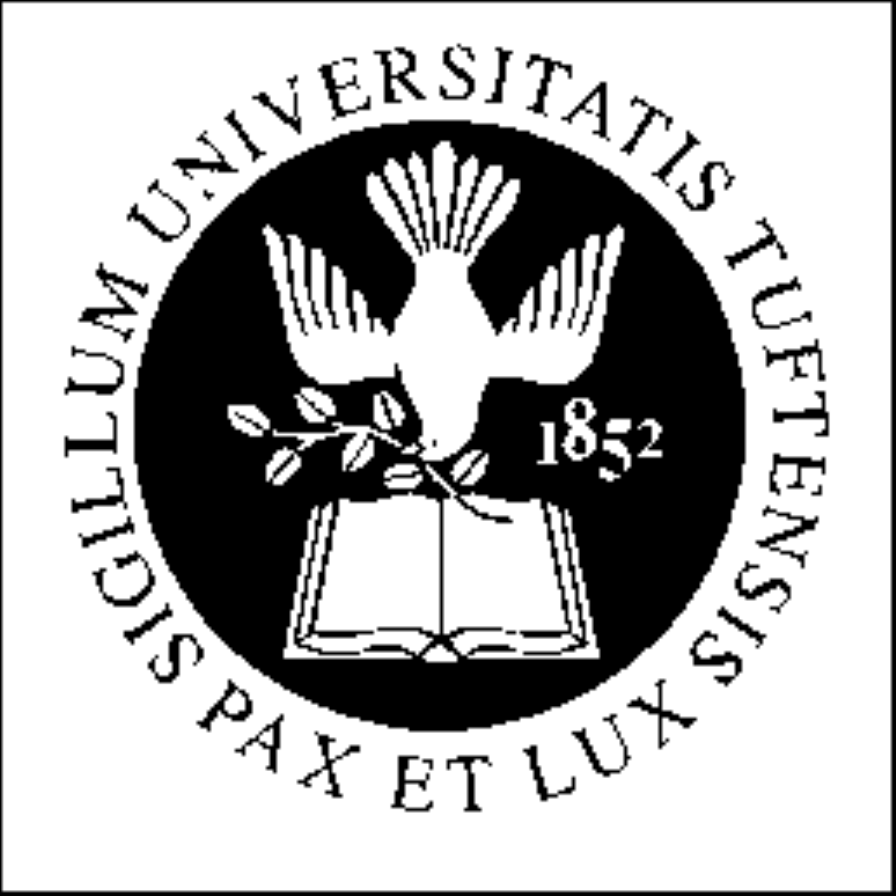}}
     \centerline{\scriptsize (a) Original Image I}
      \vspace{0.12cm}
    \end{minipage}\hfill
    \begin{minipage}[b]{0.22\linewidth}
      \centering
     \centerline{\includegraphics[width=3cm]{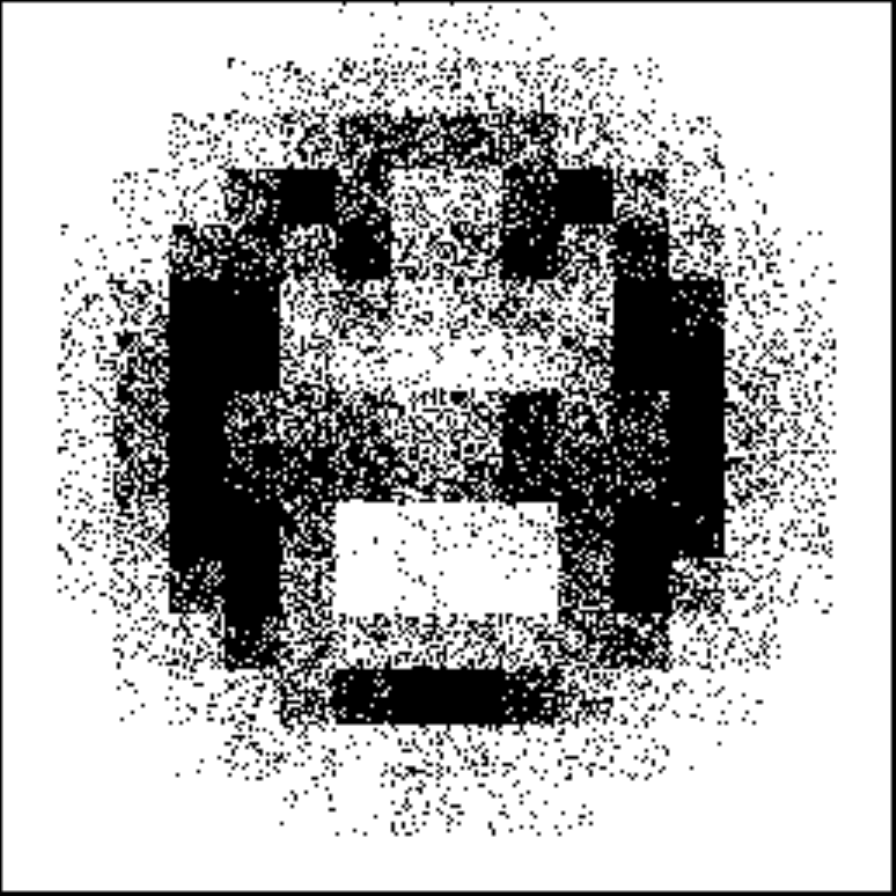}}
     \centerline{\scriptsize (b) ${\cal S}^{16\times 16}_{p-w}(I)$}
      \vspace{0.12cm}
    \end{minipage}\hfill
    \begin{minipage}[b]{.22\linewidth}
      \centering
     \centerline{\includegraphics[width=3cm]{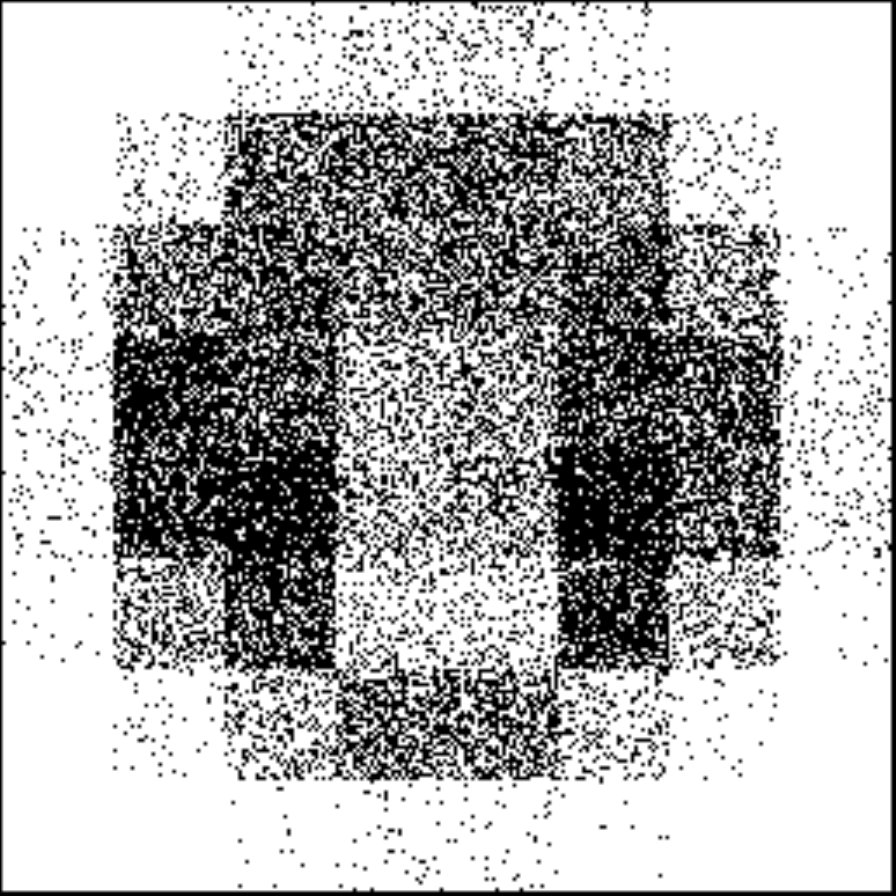}}
     \centerline{\scriptsize (c) ${\cal S}^{32\times 32}_{p-w}(I)$}
      \vspace{0.12cm}
    \end{minipage}\hfill
    \begin{minipage}[b]{.22\linewidth}
      \centering
     \centerline{\includegraphics[width=3cm]{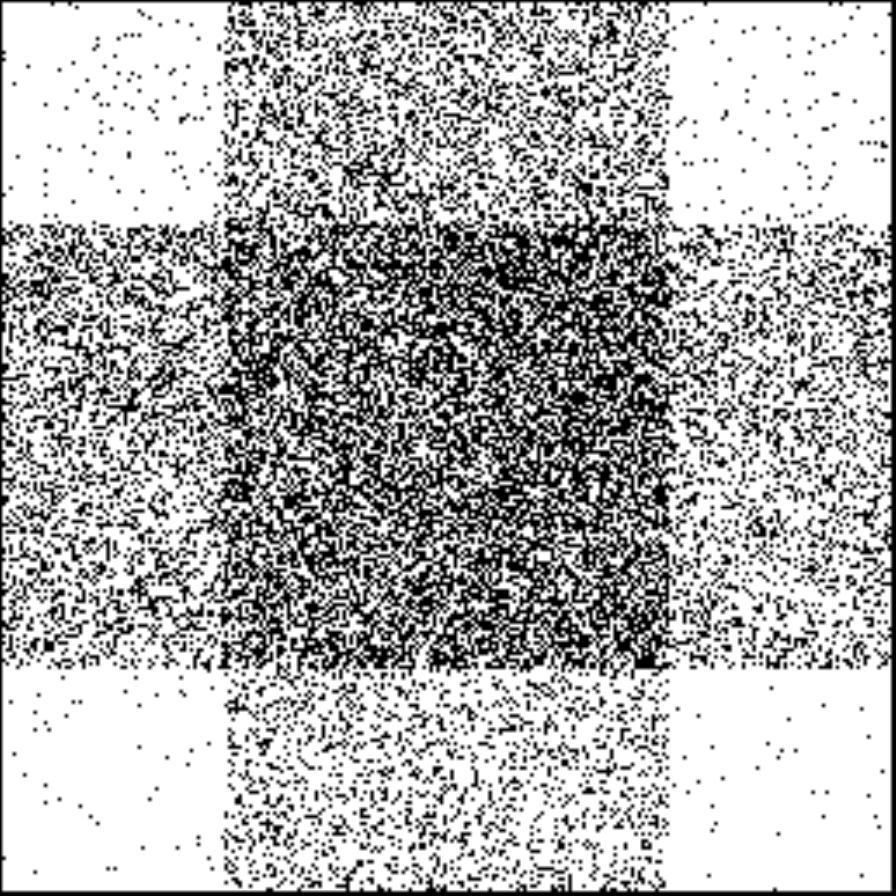}}
     \centerline{\scriptsize (d) ${\cal S}^{64\times 64}_{p-w}(I)$}
      \vspace{0.12cm}
    \end{minipage}\hfill
    \begin{minipage}[b]{.22\linewidth}
      \centering
     \centerline{\includegraphics[width=3cm]{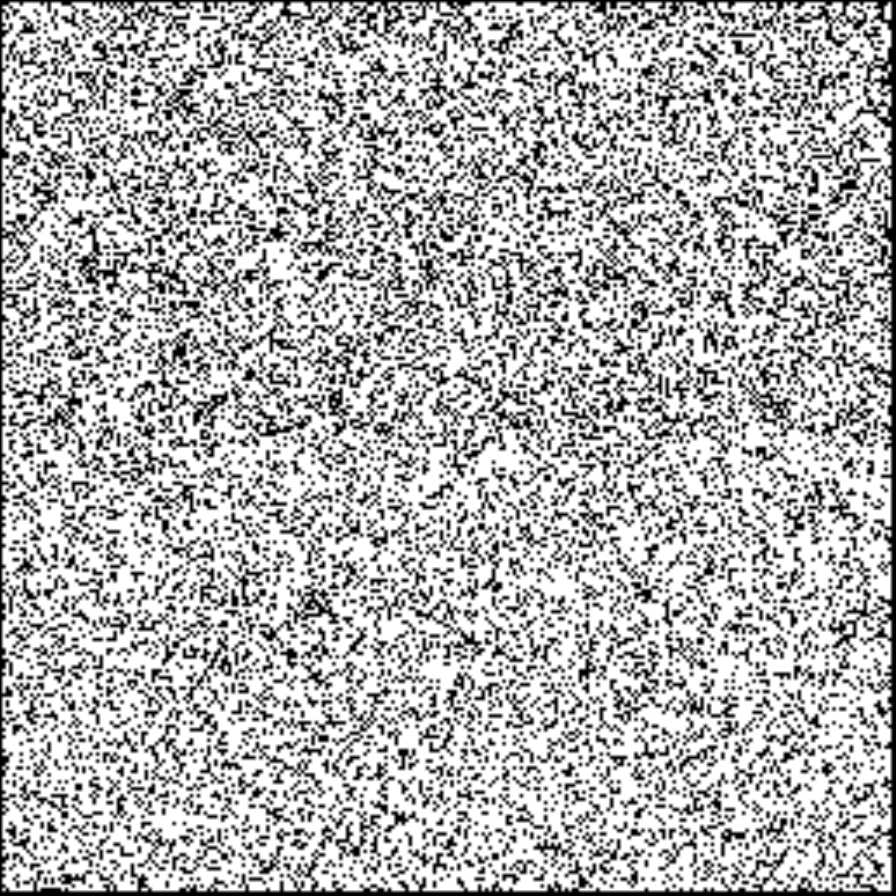}}
     \centerline{\scriptsize (e) ${\cal S}^{128\times 128}_{p-w}(I)$}
      \vspace{0.12cm}
    \end{minipage}\hfill
       \begin{minipage}[b]{0.22\linewidth}
      \centering
     \centerline{\includegraphics[width=3cm]{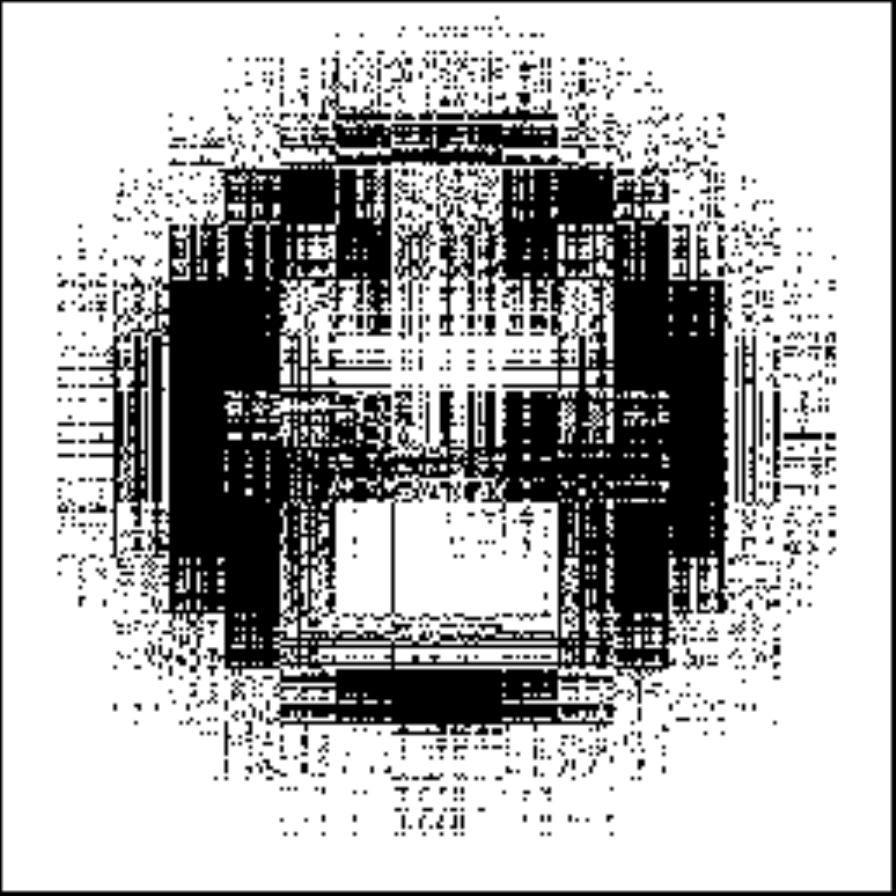}}
     \centerline{\scriptsize (f) ${\cal S}^{16\times 16}_{r-c-w}(I)$}
      \vspace{0.12cm}
    \end{minipage}\hfill
    \begin{minipage}[b]{.22\linewidth}
      \centering
     \centerline{\includegraphics[width=3cm]{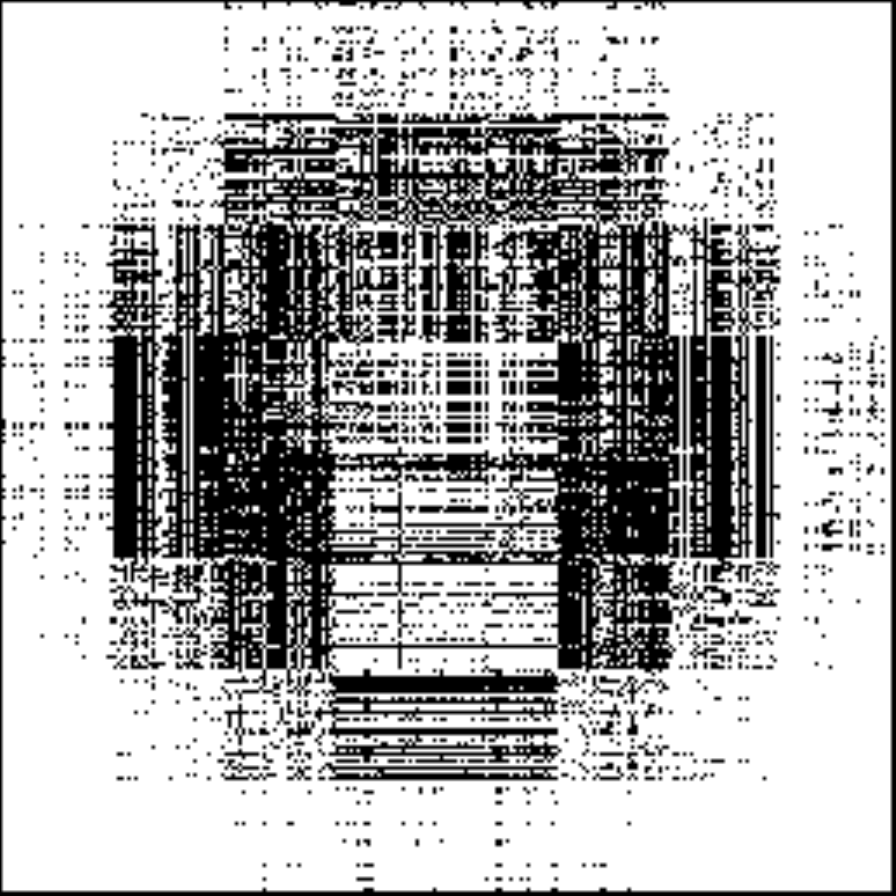}}
     \centerline{\scriptsize (g) ${\cal S}^{32\times 32}_{r-c-w}(I)$}
      \vspace{0.12cm}
    \end{minipage}\hfill
    \begin{minipage}[b]{.22\linewidth}
      \centering
     \centerline{\includegraphics[width=3cm]{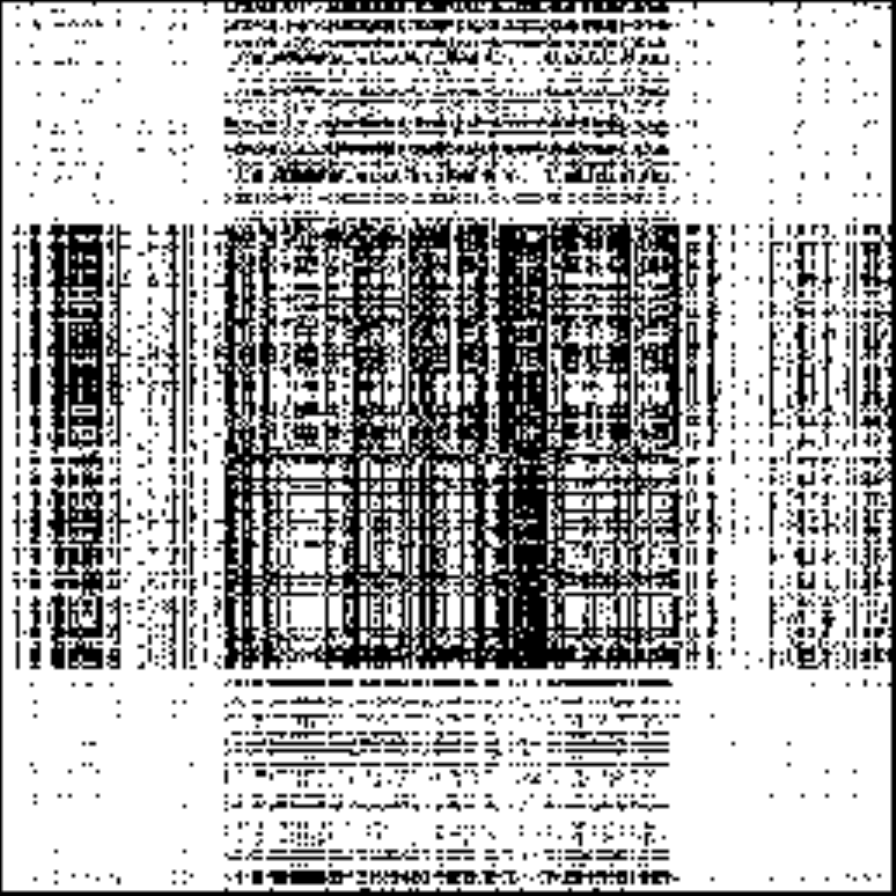}}
     \centerline{\scriptsize (h) ${\cal S}^{64\times 64}_{r-c-w}(I)$}
      \vspace{0.12cm}
    \end{minipage}\hfill
    \begin{minipage}[b]{.22\linewidth}
      \centering
     \centerline{\includegraphics[width=3cm]{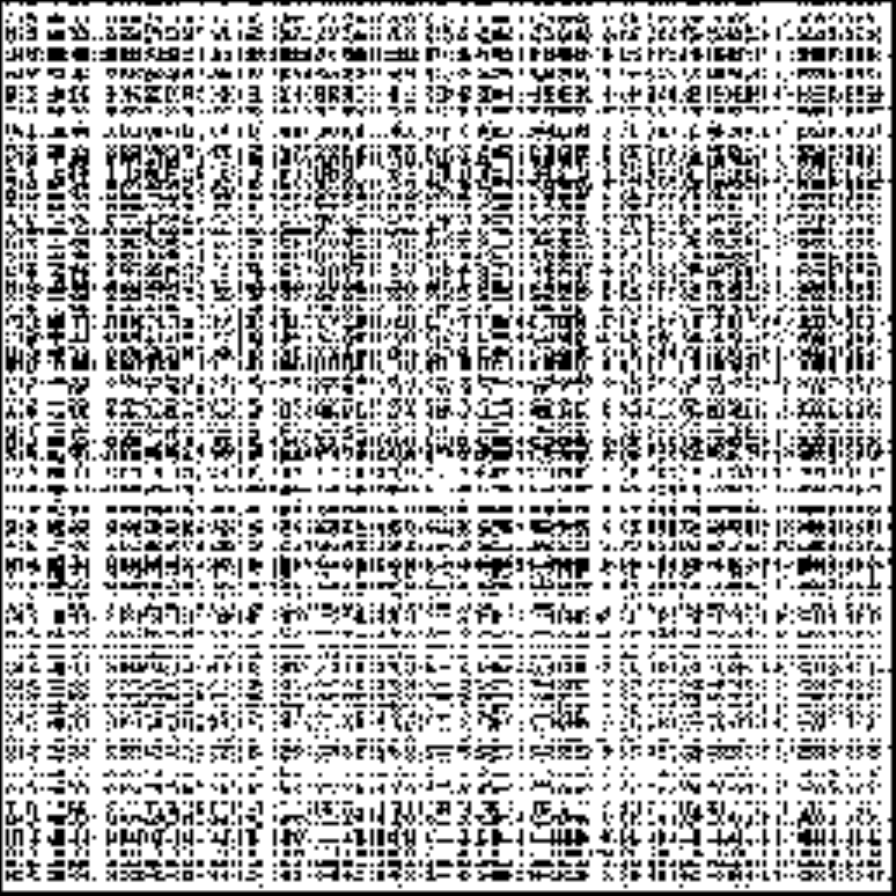}}
     \centerline{\scriptsize (i) ${\cal S}^{128\times 128}_{r-c-w}(I)$}
      \vspace{0.12cm}
    \end{minipage}\hfill
    \caption{Image shuffling for the binary Tufts logo}
\end{figure}

It is clear that testing the Shannon entropy with respect to the global image is ill-posed to measure the randomness of a shuffling based encryption algorithm,which does not change the image global statistics. Alternatively, the FIPS 140-2 tests for binary sequences and the block entropy test are used to test the randomness of images in Fig. 4.

\begin{table}[htp]
\centering
\caption{Randomness Test for Binary Images in Fig. 4}
{\scriptsize
\begin{center}
\begin{tabular}{|c|c|c|c|c|c|c|c|c|c|c|c|c|}
\hline
\multicolumn{ 1}{|c|}{\textbf{Comparison}} & \multicolumn{ 10}{c|}{\textbf{FIPS 140-2 Tests}} & \multicolumn{ 2}{c|}{\textbf{Block Entropy Test}} \\ \cline{ 2- 11}
\multicolumn{ 1}{|c|}{} & \multicolumn{ 2}{c|}{\textbf{Monobit}} & \multicolumn{ 1}{c|}{\textbf{Poker}} & \multicolumn{ 6}{c|}{\textbf{Run}} & \textbf{Long  } & \multicolumn{ 2}{c|}{\textbf{}} \\ \cline{ 5- 10}
\multicolumn{ 1}{|c|}{} & \multicolumn{ 2}{c|}{\textbf{}} & \multicolumn{ 1}{c|}{} & \multicolumn{ 6}{c|}{\textbf{Length of The Run}} & \textbf{} & \multicolumn{ 2}{c|}{\textbf{}} \\ \cline{ 5- 10} \cline{12-13}
\multicolumn{ 1}{|c|}{} & \multicolumn{ 2}{c|}{\textbf{}} & \multicolumn{ 1}{c|}{} & \textbf{1} & \textbf{2} & \textbf{3} & \textbf{4} & \textbf{5} & \textbf{$>=6$} & \textbf{Run} & \textbf{Mean} & \textbf{Std} \\ \hline
\textbf{\tiny Theoretical Value} & \multicolumn{ 2}{c|} {\textit{9725-10725}} & \textit{2.16-46.17} & \textit{2315-2685} & \textit{1114-1386} & \textit{527-723} & \textit{240-384} & \textit{103-209} & \textit{103-209} & \textit{0} & \textit{0.9971767} & \textit{0.0003993} \\ \hline

\multicolumn{ 1}{|c|}{\textbf{\tiny $I$}} & \textit{0 bit} & 11299 & \multicolumn{ 1}{c|}{2913.984} & 106 & 131 & 109 & 36 & 42 & 280 & \multicolumn{ 1}{c|}{110} & \multicolumn{ 1}{c|}{0.4576711} & \multicolumn{ 1}{c|}{0.3697290} \\ \cline{ 2- 3}\cline{ 5- 10}
\multicolumn{ 1}{|c|}{} & \textit{1 bit} & 8701 & \multicolumn{ 1}{c|}{} & 53 & 70 & 59 & 54 & 73 & 395 & \multicolumn{ 1}{c|}{} & \multicolumn{ 1}{c|}{} & \multicolumn{ 1}{c|}{} \\ \hline
\multicolumn{ 1}{|c|}{\textbf{\tiny ${\cal S}^{16\times 16}_{p-w}(I)$}} & \textit{0 bit} & 11244 & \multicolumn{ 1}{c|}{2386.291} & 1371 & 530 & 292 & 137 & 101 & 285 & \multicolumn{ 1}{c|}{82} & \multicolumn{ 1}{c|}{0.5498039} & \multicolumn{ 1}{c|}{0.3324658} \\ \cline{ 2- 3}\cline{ 5- 10}
\multicolumn{ 1}{|c|}{} & \textit{1 bit} & 8756 & \multicolumn{ 1}{c|}{} & 1238 & 565 & 309 & 169 & 84 & 352 & \multicolumn{ 1}{c|}{} & \multicolumn{ 1}{c|}{} & \multicolumn{ 1}{c|}{} \\ \hline
\multicolumn{ 1}{|c|}{\textbf{\tiny ${\cal S}^{32\times 32}_{p-w}(I)$}} & \textit{0 bit} & 11251 & \multicolumn{ 1}{c|}{695.8784} & 1774 & 830 & 499 & 239 & 184 & 461 & \multicolumn{ 1}{c|}{7} & \multicolumn{ 1}{c|}{0.6660959} & \multicolumn{ 1}{c|}{0.3168008} \\ \cline{ 2- 3}\cline{ 5- 10}
\multicolumn{ 1}{|c|}{} & \textit{1 bit} & 8749 & \multicolumn{ 1}{c|}{} & 2232 & 828 & 370 & 199 & 113 & 246 & \multicolumn{ 1}{c|}{} & \multicolumn{ 1}{c|}{} & \multicolumn{ 1}{c|}{} \\ \hline
\multicolumn{ 1}{|c|}{\textbf{\tiny ${\cal S}^{64\times 64}_{p-w}(I)$}} & \textit{0 bit} & 9411 & \multicolumn{ 1}{c|}{869.2224} & 2354 & 1050 & 512 & 263 & 155 & 218 & \multicolumn{ 1}{c|}{0} & \multicolumn{ 1}{c|}{0.7723314} & \multicolumn{ 1}{c|}{0.2896868} \\ \cline{ 2- 3}\cline{ 5- 10}
\multicolumn{ 1}{|c|}{} & \textit{1 bit} & 10589 & \multicolumn{ 1}{c|}{} & 2105 & 1112 & 550 & 298 & 165 & 323 & \multicolumn{ 1}{c|}{} & \multicolumn{ 1}{c|}{} & \multicolumn{ 1}{c|}{} \\ \hline
\multicolumn{ 1}{|c|}{\textbf{\tiny ${\cal S}^{128\times 128}_{p-w}(I)$}} & \textit{0 bit} & 5872 & \multicolumn{ 1}{c|}{4582.874} & 2897 & 856 & 259 & 79 & 26 & 6 & \multicolumn{ 1}{c|}{0} & \multicolumn{ 1}{c|}{0.8728581} & \multicolumn{ 1}{c|}{0.0552799} \\ \cline{ 2- 3}\cline{ 5- 10}
\multicolumn{ 1}{|c|}{} & \textit{1 bit} & 14128 & \multicolumn{ 1}{c|}{} & 1174 & 862 & 592 & 458 & 315 & 722 & \multicolumn{ 1}{c|}{} & \multicolumn{ 1}{c|}{} & \multicolumn{ 1}{c|}{} \\ \hline
\multicolumn{ 1}{|c|}{\textbf{\tiny ${\cal S}^{16\times 16}_{r-c-w}(I)$}} & \textit{0 bit} & 11235 & \multicolumn{ 1}{c|}{2802.157} & 1137 & 348 & 146 & 114 & 26 & 222 & \multicolumn{ 1}{c|}{97} & \multicolumn{ 1}{c|}{0.5399521}
& \multicolumn{ 1}{c|}{0.3266322} \\ \cline{ 2- 3}\cline{ 5- 10}
\multicolumn{ 1}{|c|}{} & \textit{1 bit} & 8765 & \multicolumn{ 1}{c|}{} & 857 & 481 & 190 & 122 & 43 & 301 & \multicolumn{ 1}{c|}{} & \multicolumn{ 1}{c|}{} & \multicolumn{ 1}{c|}{} \\ \hline
\multicolumn{ 1}{|c|}{\textbf{\tiny ${\cal S}^{32\times 32}_{r-c-w}(I)$}} & \textit{0 bit} & 11179 & \multicolumn{ 1}{c|}{822.0224} & 1457 & 476 & 212 & 138 & 67 & 339 & \multicolumn{ 1}{c|}{83} & \multicolumn{ 1}{c|}{0.6431998} & \multicolumn{ 1}{c|}{0.3261251} \\ \cline{ 2- 3}\cline{ 5- 10}
\multicolumn{ 1}{|c|}{} & \textit{1 bit} & 8821 & \multicolumn{ 1}{c|}{} & 1259 & 610 & 157 & 190 & 119 & 355 & \multicolumn{ 1}{c|}{} & \multicolumn{ 1}{c|}{} & \multicolumn{ 1}{c|}{} \\ \hline
\multicolumn{ 1}{|c|}{\textbf{\tiny ${\cal S}^{64\times 64}_{r-c-w}(I)$}} & \textit{0 bit} & 9917 & \multicolumn{ 1}{c|}{377.1584} & 1535 & 600 & 302 & 115 & 112 & 348 & \multicolumn{ 1}{c|}{57} & \multicolumn{ 1}{c|}{0.7548611} & \multicolumn{ 1}{c|}{0.3052360} \\ \cline{ 2- 3}\cline{ 5- 10}
\multicolumn{ 1}{|c|}{} & \textit{1 bit} & 10083 & \multicolumn{ 1}{c|}{} & 1476 & 574 & 338 & 109 & 139 & 375 & \multicolumn{ 1}{c|}{} & \multicolumn{ 1}{c|}{} & \multicolumn{ 1}{c|}{} \\ \hline
\multicolumn{ 1}{|c|}{\textbf{\tiny ${\cal S}^{128\times 128}_{r-c-w}(I)$}} & \textit{0 bit} & 5370 & \multicolumn{ 1}{c|}{10966.144} & 1957 & 714 & 384 & 79 & 56 & 34 & \multicolumn{ 1}{c|}{0} & \multicolumn{ 1}{c|}{0.8508929} & \multicolumn{ 1}{c|}{0.1121833} \\ \cline{ 2- 3}\cline{ 5- 10}
\multicolumn{ 1}{|c|}{} & \textit{1 bit} & 14630 & \multicolumn{ 1}{c|}{} & 1075 & 752 & 449 & 269 & 156 & 524 & \multicolumn{ 1}{c|}{} & \multicolumn{ 1}{c|}{} & \multicolumn{ 1}{c|}{} \\ \hline
\end{tabular}
\end{center}}
\label{Randomness Table}
\end{table}

\begin{figure}[hbp]
    \label{fig-FIPS}
    \begin{minipage}[b]{.48\linewidth}
      \centering
     \centerline{\includegraphics[width=\linewidth]{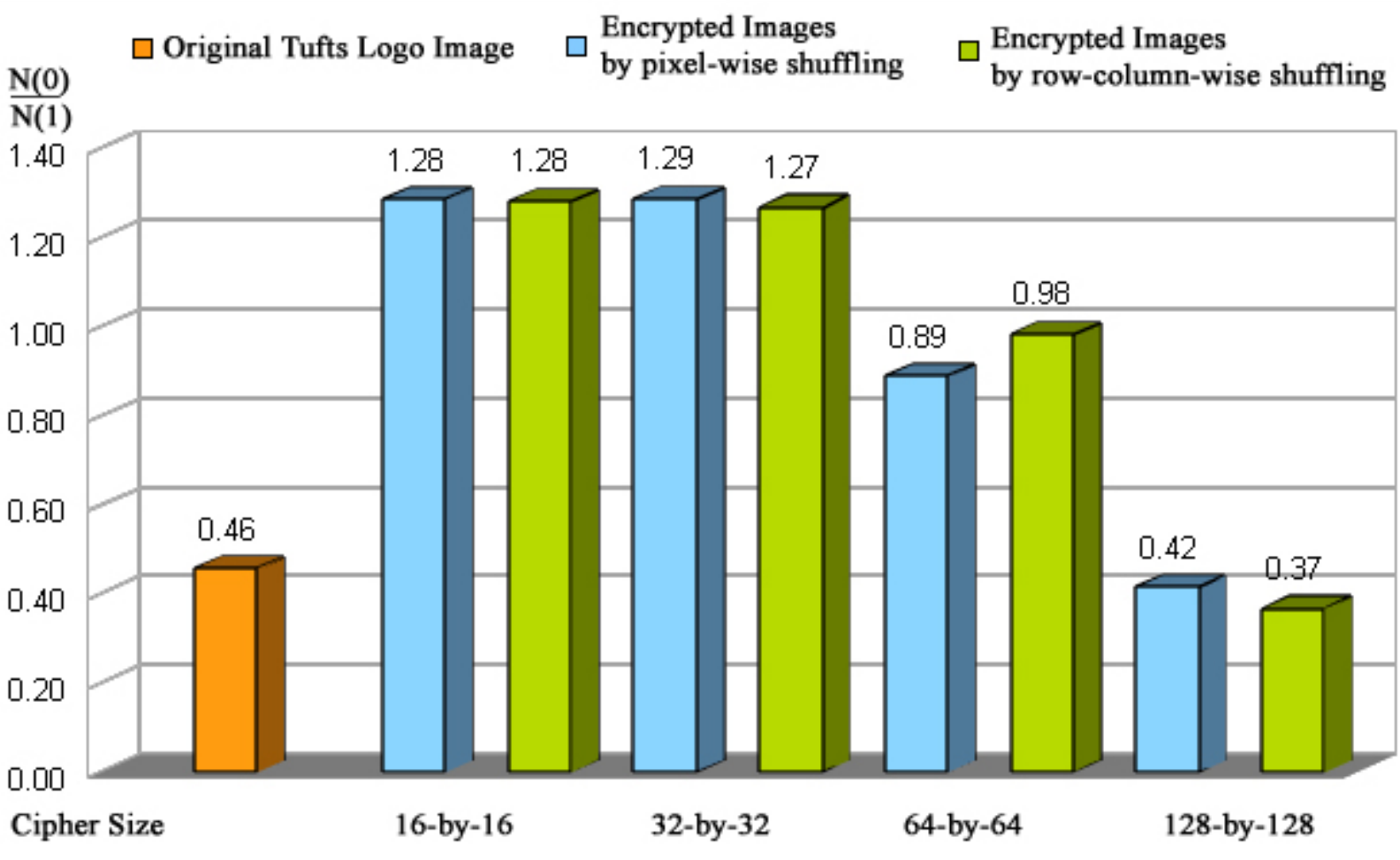}}
     \centerline{\scriptsize (a) Tendency of $T=\#0bits/\#1bits$}
      \vspace{0.12cm}
    \end{minipage}\hfill
    \begin{minipage}[b]{0.48\linewidth}
      \centering
     \centerline{\includegraphics[width=\linewidth]{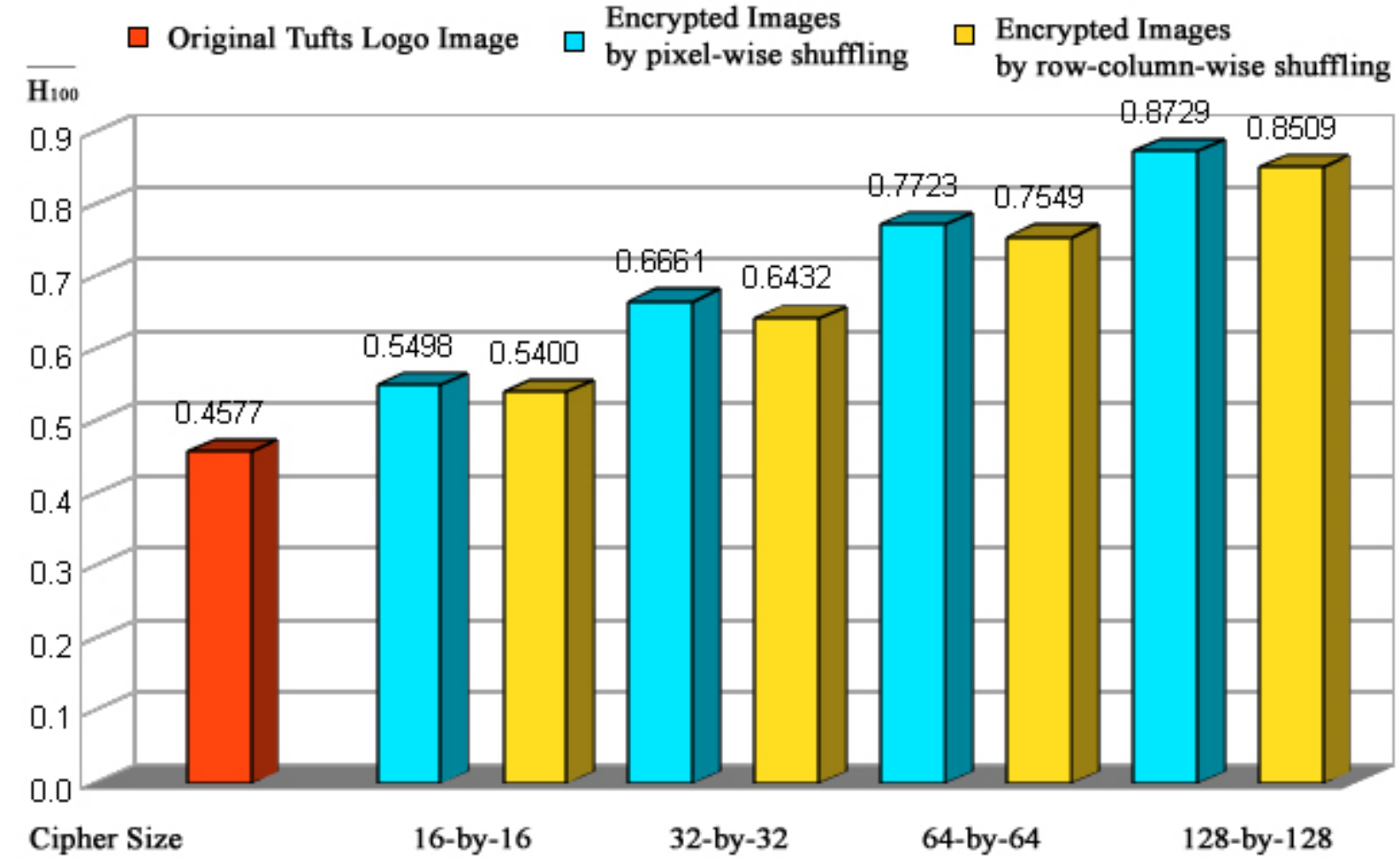}}
     \centerline{\scriptsize (b) Tendency of $\overline {H_{100}}$}
      \vspace{0.12cm}
    \end{minipage}\hfill
    \caption{Randomness Test for Image Shuffling}
\end{figure}
%
%It is clear that the image of Tufts logo is not random and thus it cannot pass the FIPS 140-2 randomness tests and so do its shuffled versions. However, the FIPS 140-2 tests are still applied to these images because we are interested in investigating whether the same randomness conclusions from the human visual inspection can be drawn by using the randomness test scores in FIPS 140-2 or the propose block entropy test %catches the obvious tendency of randomness change as the size of the used block cipher increases.
%Results are shown in Table \ref{Randomness Table}. It is worthy to note that (1) in order to fit the length requirement of FIPS 140-2 tests, only the center 100-by-200 pixels are used in randomness tests; and (2) the select block size for block entropy test is 16-by-16 and the sample size is 100.

Although the binary Tufts logo is not random and neither are any of its shuffled versions, the objective of applying these two set randomness tests is to investigate whether some test randomness score(s) could be meaningful in the sense of coinciding with the randomness tendencies observed by using human visual inspections. The results of FIPS 140-2 tests and the proposed block entropy test are shown in Table \ref{Randomness Table}. It is worthwhile to note the required binary sequence length for FIPS 140-2 tests is 20,000. In our tests, a rectangular region of interest (ROI) at size of 100-by-200 is randomly selected for each test image, then the test binary sequence is extracted one by one from this ROI and is tested with FIPS 140-2 randomness tests. Meanwhile, the size $M$-by-$N$ for the block entropy test is set to 16-by-16 and the sample size $K$ is 100.

%
%From Table \ref{Randomness Table}, the obvious tendencies are found for columns of the block entropy test: as the size of block cipher increases, $\overline {H_{100}}$ increases correspondingly with a smaller standard deviation. An simple explanation is that pixels are shuffled within a larger region as the size of the used block cipher increases,  and thus pixels more evenly arranged (a higher $\overline {H_{100}}$ ) and the different regions tends to look more similar (a smaller $\sigma_{H_{100}}$). Fig \ref{fig-FIPS}-(a) shows the tendency. Meanwhile, as the human visual inspection tells us, the pixel-wise shuffling gives higher $\overline {H_{100}}$ than the row-column-wise shuffling.

From the column of 'Block Entropy Test' in Table \ref{Randomness Table}, it is clear that the block entropy test scores matches our expectation of image randomness from the point view of human visual inspection. Although one statistic $T$, the ratio of the number of 0 bits to the number of 1 bits, roughly supports the two conclusions, this ratio is indirect and implicit in the randomness test of Monobit. Unless the ratio $T$ of the original binary 'Tufts' logo is known as 0.41, it is hard to draw conclusions  simply by looking at Fig. 5-(a). However, the two conclusions are pretty obvious with results of the block entropy test as Fig. 5-(b) shows.

%In contrast, FIPS 140-2 test results are either not clear or not direct to show this tendency. The 'Long Run Test' does give a monotonic decreasing results with respect to the cipher size, but it is hard to tell the shuffling quality from the 'Long Run Test' scores. For example, the 'Long Run Test' scores of ${{\cal S}^{64\times 64}_{e^{r,c}_\Pi}(I)}$ and ${{\cal S}^{128\times 128}_{e^{r,c}_\Pi}(I)}$ are both zeros but they are very different with human visual inspection. The 'Poker Test' and 'Run Test' do not show this tendency because the test scores are not monotonic but fluctuant. For example, in 'Poker Test' scores, ${{\cal S}^{128\times 128}_{e^{r,c}_\Pi}(I)}$ is about 5 times of ${{\cal S}^{32\times 32}_{e^{r,c}_\Pi}(I)}$, which implies that ${{\cal S}^{128\times 128}_{e^{r,c}_\Pi}(I)}$ is less random than ${{\cal S}^{32\times 32}_{e^{r,c}_\Pi}(I)}$. The best test in FIPS 140-2 to show this tendency is 'Monobit Test', although it shows the tendency in an implicit way. The original Tufts logo image has a ratio of 0-pixels to 1-pixels of 0.41. When the ratio of the number of 0bit to the number of 1bit is calculated, this tendency becomes more obvious as Fig. \ref{fig-FIPS}-(b) shows: as the size of block cipher increases, the ratio gets closer to 0.41. This result implies that the local ratio of 0bits to 1bits approximately equals the global ratio and thus the image is well shuffled.

From this example, it is clear that the proposed block entropy test is applicable to image shuffling problems as well and the test score indicates the averaged randomness of the shuffled image. A higher block entropy test score implies a higher randomness for local image blocks and thus a better shuffling quality. Compared to conventional randomness test tools for binary sequences, the block entropy test is more robust for testing image randomness. Meanwhile, the test scores do match results from human visual inspections.

\subsection{Test the Randomness of Image Encryption}
In this example, the block entropy test is applied to images encrypted by five image ciphers, including Blowfish \cite{Blowfish}, AES \cite{AES}, Twofish \cite{Twofish}, 3DCat \cite{3DCat} and Sudoku \cite{Sudoku} ciphers. Blowfish, AES and Twofish are well-known block ciphers designed for binary sequence encryption and are still the prevailing encryption methods even if they are not designed for digital images. Commercial ciphers \cite{I-Cipher,pictureEncryption} implemented Blowfish, AES and Twofish algorithms are used in this simulation. '3DCat' refers to the image encryption method based on chaotic sequences generated from 3D chaotic cat maps. 'Sudoku' refers to the Sudoku image cipher, which employs the Sudoku matrix in conventional substitution and transposition ciphers. Images encrypted by using '3DCat' and 'Sudoku' are realized by using the original codes. The USC-SIPI 'Miscellaneous' image set is selected as our plaintext image set. Computer simulations are done under Windows XP environment with MATLAB R2010a.
\begin{figure*}[htb]

    \label{fig-SIPI}
    %\begin{minipage}[b]
      \centering
     \centerline{\includegraphics[width=12cm]{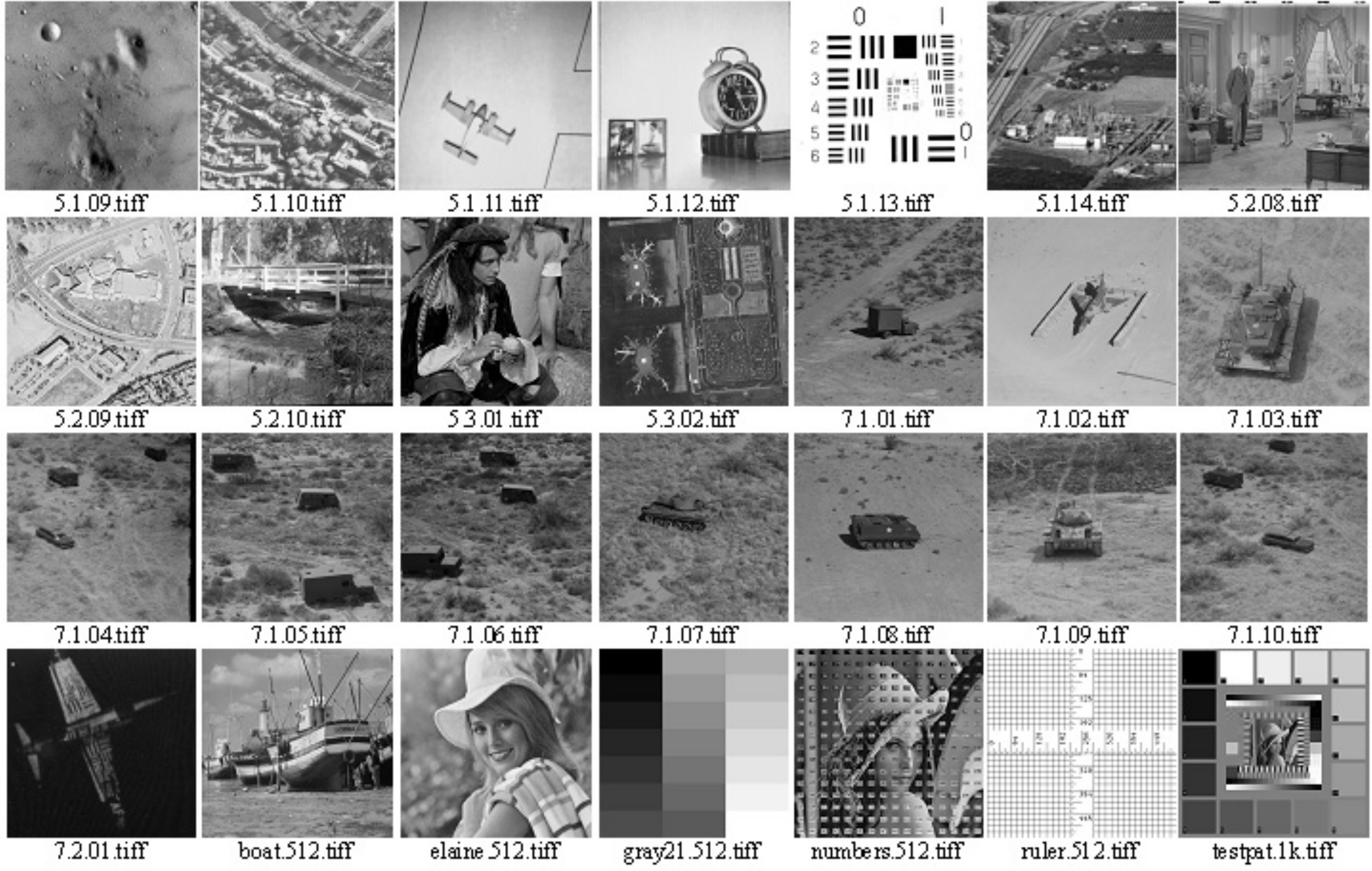}}
      %\vspace{0.12cm}
    %\end{minipage}\hfill
    \caption{Test Images from USC-SIPI 'Miscellaneous' Image Dataset}
\end{figure*}

Fig. 6 shows the total 28 test plaintext images covering a wide collection of image contents, scenes and patterns. These images are all gray with a size from 256-by-256 to 1024-by-1024. It is obvious that some man-made images in the plaintext image set are very challenging for encryption. For example, 'ruler.512.tif' is a gray image, but it is more similar to a binary image containing only black and white. Meanwhile, it has mesh-like patterns and large homogenous regions. All these properties imply that this image has a very tilted histogram with high correlations between rows, columns and image blocks. As a result, this image is considered as a hard case for image encryption.

In the following test, the 28 plaintext images first are encrypted by mentioned ciphers and then the block entropy test is applied to each encrypted image. It is worthwhile to note that related parameters are:
\begin{itemize}
    \item The block size used in test is 16-by-16, i.e. $MN = 256$
    \item The number of intensity scales is 256, i.e. $L = 256$
    \item The number of samples is 100, i.e. $K = 100$
\end{itemize}
Therefore, the theoretical distribution under ${\cal H}_0$, the critical values associated with $\alpha = 0.01$ and 100 samples in the hypothesis tests, and the corresponding upper bound of Type I error are:
\begin{itemize}
    \item Theoretical distribution under ${\cal H}_0$: $H_{100} \sim {\cal N}(\mu_{H_{100}^*}, \sigma_{H_{100}^*}^2)$, where $\mu_{H_{100}^*} = 7.1749663525$ and $\sigma_{H_{100}^*} = 0.00524379986$ (See Table 1)
    \item Critical value: $H_c = 7.1627674499$ with $\alpha = 0.01$ and 100 samples (See Table \ref{tabCriticalI}), if $\overline{H_{100}}<H_c$ then we say the test image is NOT ideally encrypted.
    \item Upper bound of Type I error when 100 is not considered as sufficiently large: $\gamma = 0.08654$ (See Table \ref{tabError})
\end{itemize}

Table 5 shows the block entropy test for images including original test images in Fig. 6 and encrypted images by mentioned ciphers. Data visualization for block entropy test results are provided in Fig. 7 (regions above $H_c$ is shaded). From this Table, it is easy to see that images encrypted by block ciphers for binary data tend to have a much lower mean $\overline{H_{100}}$ than those encrypted by image ciphers. On the other hand, $\overline{H_{100}}$ for image ciphers are around $\mu_{H_{100}^*}$ with much smaller standard deviations than those of block ciphers for binary data. This result support the claim that block ciphers for binary data are inappropriate to encrypt image data \cite{DateEncryption}.

Moreover, the $\overline{H_{100}}$ s that below  $H_c = 7.1627674499$ are marked with '*' in Table 5. The number of times of rejecting ${\cal H}_0$ for tested ciphers, denoted as $\#R$, are listed in the table. According the total number of images $T = 28$ with $\alpha = 0.01$ and $\gamma = 0.08654$, we know that:
\begin{itemize}
    \item If the tested cipher is ideal and $K = 100$ is sufficiently large, then the expected $\#R$ for a 28-image test equals $T\alpha = 0.28$. In other words, for any cipher which has a number of rejections bigger than 1, it is NOT ideal.
    \item If the tested cipher is ideal and $K = 100$ is NOT sufficiently large, then the upper bound of the expected $\#R$ for a 28-image equals $T\gamma = 2.42$. In other words, for any cipher which has a number of rejections bigger than 3, it is NOT ideal.
\end{itemize}
Therefore, based on this knowledge, all tested block ciphers for binary data do not generate ideally encrypted images for image data. '3DCat' cipher is a borderline because if the sample size is considered to be sufficiently large, then it is not of a type of ideal cipher which always generates ideally encrypted images, otherwise it is. 'Sudoku' cipher always give random-like ciphertext in this 28-image test and thus no evidence to reject that 'Sudoku' cipher is an ideal cipher for image data.

\begin{table}[htbp]
\begin{center}
\hfill{}
\label{Entropy Table}
\caption{The Block Entropy Test for Encrypted Images using 100 image blocks at size of 16-by-16}
\scriptsize
\begin{tabular}{|c||c||c|c|c||c|c|}
\hline
\textbf{} & \textbf{} & \multicolumn{ 5}{c|}{ \textbf{\scriptsize Encryption Method}} \\ \hline
\scriptsize  \textbf{\scriptsize Image} &  \textbf{\scriptsize \textit{Original}} &  \textbf{\textit{\scriptsize BlowFish}} &  \textbf{\scriptsize \textit{AES}} & \scriptsize  \textbf{\textit{\scriptsize TwoFish}} & \scriptsize  \textbf{\textit{\scriptsize 3DCat}} & \scriptsize  \textbf{\textit{\scriptsize Sudoku}} \\ \hline\hline
\scriptsize  \textbf{\textit{\scriptsize 5.1.09}} & \scriptsize  5.896705* & \scriptsize  7.036605* & \scriptsize  7.017258* & \scriptsize  7.028879* & \scriptsize  7.171908 & \scriptsize  7.180215 \\ \hline
\scriptsize  \textbf{\textit{\scriptsize 5.1.10}} & \scriptsize  6.692363* & \scriptsize  7.033179* & \scriptsize  7.039004* & \scriptsize  7.035334* & \scriptsize  7.179439 & \scriptsize  7.169024 \\ \hline
\scriptsize  \textbf{\textit{\scriptsize 5.1.11}} & \scriptsize  5.544970* & \scriptsize  7.020790* & \scriptsize  7.038881* & \scriptsize  7.033139* & \scriptsize  7.182150 & \scriptsize  7.177942 \\ \hline
\scriptsize  \textbf{\textit{\scriptsize 5.1.12}} & \scriptsize  5.557432* & \scriptsize  7.030360* & \scriptsize  7.019917* & \scriptsize  7.038484* & \scriptsize  7.173553 & \scriptsize  7.175398 \\ \hline
\scriptsize  \textbf{\textit{\scriptsize 5.1.13}} & \scriptsize  1.158392* & \scriptsize  4.517699* & \scriptsize  4.942020* & \scriptsize  4.917323* & \scriptsize  7.176973 & \scriptsize  7.186245 \\ \hline
\scriptsize  \textbf{\textit{\scriptsize 5.1.14}} & \scriptsize  6.681083* & \scriptsize  7.039437* & \scriptsize  7.043030* & \scriptsize  7.043213* & \scriptsize  7.175974 & \scriptsize  7.165637 \\ \hline
\scriptsize  \textbf{\textit{\scriptsize 5.2.08}} & \scriptsize
 6.017607* & \scriptsize  7.171057 & \scriptsize  7.168529 & \scriptsize  7.169126 & \scriptsize  7.183370 & \scriptsize  7.178135 \\ \hline
\scriptsize  \textbf{\textit{\scriptsize 5.2.09}} & \scriptsize  6.228063* & \scriptsize  7.172360 & \scriptsize  7.168744 & \scriptsize  7.173990 & \scriptsize  7.165557 & \scriptsize  7.173511 \\ \hline
\scriptsize  \textbf{\textit{\scriptsize 5.2.10}} & \scriptsize  5.045033* & \scriptsize  7.166008 & \scriptsize  7.175324 & \scriptsize  7.164280 & \scriptsize  7.168637 & \scriptsize  7.173879 \\ \hline
\scriptsize  \textbf{\textit{\scriptsize 5.3.01}} & \scriptsize  5.923279* & \scriptsize  7.169365 & \scriptsize  7.173941 & \scriptsize  7.173221 & \scriptsize  7.176639 & \scriptsize  7.170577 \\ \hline
\scriptsize  \textbf{\textit{\scriptsize 5.3.02}} & \scriptsize  5.828853* & \scriptsize  7.174636 & \scriptsize  7.174182 & \scriptsize  7.170160 & \scriptsize  7.172147 & \scriptsize  7.177803 \\ \hline
\scriptsize  \textbf{\textit{\scriptsize 7.1.01}} & \scriptsize  5.256613* & \scriptsize  7.177397 & \scriptsize  7.168727 & \scriptsize  7.168608 & \scriptsize  7.146194* & \scriptsize  7.176737 \\ \hline
\scriptsize  \textbf{\textit{\scriptsize 7.1.02}} & \scriptsize  3.130283* & \scriptsize  7.163421 & \scriptsize  7.167083 & \scriptsize  7.163486 & \scriptsize  7.182047 & \scriptsize  7.182979 \\ \hline
\scriptsize  \textbf{\textit{\scriptsize 7.1.03}} & \scriptsize  5.015711* & \scriptsize  7.163324 & \scriptsize  7.175095 & \scriptsize  7.172171 & \scriptsize  7.170552 & \scriptsize  7.173915 \\ \hline
\scriptsize  \textbf{\textit{\scriptsize 7.1.04}} & \scriptsize  5.233899* & \scriptsize  7.164166 & \scriptsize  7.166298 & \scriptsize  7.166104 & \scriptsize  7.141284* & \scriptsize  7.181730 \\ \hline
\scriptsize  \textbf{\textit{\scriptsize 7.1.05}} & \scriptsize  5.990831* & \scriptsize
 7.173422 & \scriptsize  7.168685 & \scriptsize  7.113663* & \scriptsize  7.169440 & \scriptsize  7.176741 \\ \hline
\scriptsize  \textbf{\textit{\scriptsize 7.1.06}} & \scriptsize  6.101809* & \scriptsize  7.166029 & \scriptsize  7.164576 & \scriptsize  7.167572 & \scriptsize  7.171974 & \scriptsize  7.168835 \\ \hline
\scriptsize  \textbf{\textit{\scriptsize 7.1.07}} & \scriptsize  5.534905* & \scriptsize  7.161739* & \scriptsize  7.171792 & \scriptsize  7.167448 & \scriptsize  7.165520 & \scriptsize  7.171042 \\ \hline
\scriptsize  \textbf{\textit{\scriptsize 7.1.08}} & \scriptsize  4.268606* & \scriptsize  7.167514 & \scriptsize  7.172741 & \scriptsize  7.167774 & \scriptsize  7.178153 & \scriptsize  7.182649 \\ \hline
\scriptsize  \textbf{\textit{\scriptsize 7.1.09}} & \scriptsize  5.543805* & \scriptsize  7.164298 & \scriptsize  7.164665 & \scriptsize  7.164676 & \scriptsize   7.181045 & \scriptsize  7.177071 \\ \hline
\scriptsize  \textbf{\textit{\scriptsize 7.1.10}} & \scriptsize  5.366133* & \scriptsize  7.164842 & \scriptsize 7.172609 & \scriptsize  7.166904 & \scriptsize  7.175036 & \scriptsize  7.174440 \\ \hline
\scriptsize  \textbf{\textit{\scriptsize 7.2.01}} & \scriptsize  4.724605* & \scriptsize  7.183806 & \scriptsize  7.169110 & \scriptsize  7.167013 & \scriptsize  7.172266 & \scriptsize   7.172823 \\ \hline
\scriptsize  \textbf{\textit{\scriptsize boat.512}} & \scriptsize  6.165926* & \scriptsize  7.159163* & \scriptsize  7.174200 & \scriptsize  7.166788 & \scriptsize  7.176104 & \scriptsize  7.174827 \\ \hline
\scriptsize  \textbf{\textit{\scriptsize elaine.512}} & \scriptsize  6.246007* & \scriptsize  7.166420 & \scriptsize  7.161446* & \scriptsize  7.161411* & \scriptsize  7.168059 & \scriptsize  7.174350 \\ \hline
\scriptsize  \textbf{\textit{\scriptsize gray21.512}} & \scriptsize  1.949937* & \scriptsize  5.208080* & \scriptsize  5.392334* & \scriptsize  5.455673* & \scriptsize  7.165595 & \scriptsize  7.199380 \\ \hline
\scriptsize  \textbf{\textit{\scriptsize numbers.512}} & \scriptsize  5.898473* & \scriptsize  7.159820* & \scriptsize  7.168180 & \scriptsize  7.173836 & \scriptsize  7.173024 & \scriptsize  7.167303 \\ \hline
\scriptsize  \textbf{\textit{\scriptsize ruler.512}} & \scriptsize  0.330469* & \scriptsize  2.132499* & \scriptsize  2.468986* & \scriptsize  2.562190* & \scriptsize  7.175261 & \scriptsize  7.187348 \\ \hline
\scriptsize  \textbf{\textit{\scriptsize testpat.1k}} & \scriptsize  1.625584* & \scriptsize  4.323696* & \scriptsize  4.779638* & \scriptsize  4.781003* & \scriptsize  7.173987 & \scriptsize  7.192987 \\ \hline\hline
\scriptsize  \textbf{Statistics} & \scriptsize  \scriptsize  \textbf{\textit{Original}} & \scriptsize  \scriptsize  \textbf{\textit{\scriptsize BlowFish}} & \scriptsize
 \scriptsize  \textbf{\textit{\scriptsize AES}} & \scriptsize  \scriptsize  \textbf{\textit{TwoFish}} &
  \textbf{\textit{\scriptsize 3DCat}} &  \textbf{\scriptsize \textit{Sudoku}} \\ \hline
\scriptsize  \textbf{\textit{\scriptsize Average}} & \scriptsize  4.9627633737 & \scriptsize  6.6975405565 & \scriptsize  6.7488211774 & \scriptsize  6.7511953801 &\scriptsize
 7.1718531782 & \scriptsize  7.177268703 \\ \hline
\scriptsize  \textbf{\textit{\scriptsize Std}} & \scriptsize  1.7133301636 & \scriptsize  1.1895800576 & \scriptsize  1.0726741341 & \scriptsize  1.0564646259 & \scriptsize  0.0093951696 & \scriptsize  0.0075774206 \\ \hline
\scriptsize  \textbf{\textit{\scriptsize \# Rejection ${\cal H}_0$}} & \scriptsize  28 & \scriptsize  12 & \scriptsize  10 & \scriptsize  11 & \scriptsize  2 & \scriptsize  0 \\ \hline
\end{tabular}
\hfill{}
\end{center}
\end{table}

\begin{figure}[htb]
    \label{fig-SIPI-results}
    %\begin{minipage}[b]
      \centering
     \centerline{\includegraphics[width=15cm]{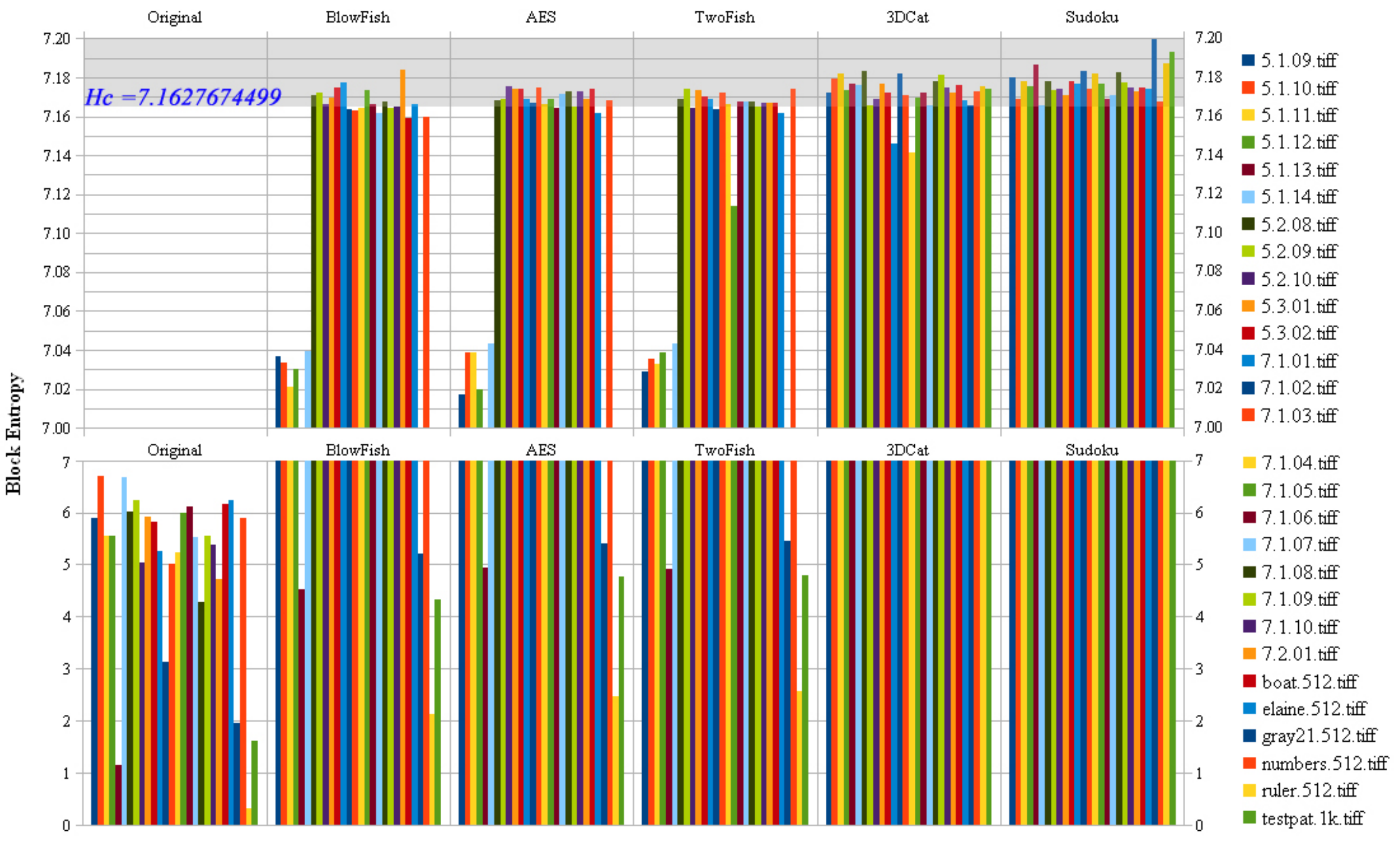}}
      %\vspace{0.12cm}
    %\end{minipage}\hfill
    \caption{The Block Entropy Test for Encrypted Images using 100 gray image blocks at size 16-by-16 }
\end{figure}
\section{Conclusion}
In this paper, the block entropy test for image encryption is proposed. Unlike the conventional global entropy test for image encryption, it measures the sample mean $\overline{H_K}$ of information entropies for $K$ non-overlapped image blocks within a test image, then compare this $H_K$ with reference theoretical values to conclude whether the test image is ideally encrypted. Therefore, it has a better ability to measure the randomness of local image blocks within a test image and gives more reliable randomness measures than the global entropy test. In this way, the block entropy test contains all four desired properties for testing image randomness.

The theoretical mean $\mu_{H_K^*}$ and variance $\sigma_{H_K^*}^2$ of $\overline{H_K}$ are derived from the ideally encrypted image/true random image model. This assumption states that the random variable of any pixel's intensity follows a discrete uniform distribution on $[0,L-1]$ for an ideally encrypted image, where $L$ is the number of allowed intensity scales related to the image format.
As a result, the quantitative test result can be obtained by comparing the actual test $\overline{H_K}$ with theoretical mean $\mu_{H_K^*}$

Moreover, an $\alpha$-level of significance test is also derived. As a result, the qualitative test result becomes to check whether or not the test $\overline{H_K}$ is below the critical value $H_c$ associated with the $\alpha$-level of significance. The upper bound of Type I error $\gamma$ for the proposed hypothesis test is given by using the BET, for the case when the sample size $K$ in test is insufficiently large,

In the application section, two examples are shown. The first example focuses on quantitatively measuring the randomness of scrambled images using the block entropy test. Results show that the block entropy scores for test images match randomness extents given by human visual inspections. In contrast, the conventional FIPS 140-2 randomness tests fail to give 'reasonable' randomness scores for the same set of images. This implies that the block entropy test is effective and meaningful in measuring image randomness quantitatively. The second example concentrates on qualitatively testing whether encrypted images are random-like. Gray images from the USC-SIPI image data set are used in simulation, because these images covers a wide range of contents, textures, types etc. Simulation results of the block entropy test for five image ciphers show that the block ciphers for binary data are inappropriate for image data because their quantitative $\overline{H_{100}}$ are a way lower than the theoretical mean $\mu_{H_{100}^*}$ for many test images. Furthermore, the qualitative results of block ciphers for binary data are very poor as well. In contrast, image ciphers '3Dcat' and 'Sudoku' performance much better. Especially noteworthy, the 'Sudoku' image cipher is the only test cipher that always generates random-like ciphertext images. These results show that the block entropy test is effective and robust.

%% References with BibTeX database:
\bibliographystyle{model1b-num-names}
\bibliography{Ref}

%\bibliography{JRef}Copy of My_EndNote_Library
%\bibliography{}

%% Authors are advised to use a BibTeX database file for their reference list.
%% The provided style file elsarticle-num.bst formats references in the required Procedia style

%% For references without a BibTeX database:

% \begin{thebibliography}{00}

%% \bibitem must have the following form:
%%   \bibitem{key}...
%%

% \bibitem{}

% \end{thebibliography}

\end{document}